\providecommand{\U}[1]{\protect\rule{.1in}{.1in}}
\newtheorem{theorem}{Theorem}
\newtheorem{definition}{Definition}
\newtheorem{lemma}{Lemma}
\newtheorem{remark}{Remark}
\begin{document}

\title{Discriminatory\ Lossy Source Coding:\ Side Information Privacy}
\author{Ravi Tandon,~\IEEEmembership{Member,~IEEE,} Lalitha
Sankar,~\IEEEmembership{Member,~IEEE,} and~H. Vincent
Poor,~\IEEEmembership{Fellow,~IEEE}\thanks{R. Tandon, L. Sankar, and H. V.
Poor are with the Department of Electrical\ Engineering at Princeton
University, NJ 08544, USA. email: \{rtandon,lalitha,poor@princeton.edu\}.}%
\thanks{This research was supported in part by the National\ Science
Foundation under Grants CNS-09-05086, CNS-09-05398, and CCF-10-16671, the Air
Force Office of Scientific\ Research under Grant FA9550-09-1-0643, and by a
fellowship from the Princeton University Council on\ Science and Technology. }}
\pubid{~}
\specialpapernotice{~}
\maketitle

\begin{abstract}
A lossy source coding problem is studied in which a source encoder
communicates with two decoders, one with and one without correlated side
information with an additional constraint on the privacy of the side
information at the uninformed decoder. Two cases of this problem arise
depending on the availability of the side information at the encoder. The set
of all feasible rate-distortion-equivocation tuples are characterized for both
cases. The difference between the informed and uninformed cases and the
advantages of encoder side information for enhancing privacy are highlighted
for a binary symmetric source with erasure side information and Hamming distortion.

\end{abstract}

\begin{keywords}
lossy source coding, information privacy, side information, equivocation,
discriminatory coding, informed and uninformed encoders, Heegard-Berger
problem, Kaspi problem.
\end{keywords}

\section{Introduction}

Information sources often need to be made accessible to multiple legitimate
users simultaneously, some of whom can have correlated side information
obtained from other sources or from prior interactions. A natural question
that arises in this context is the following: can the source publish (encode)
its data in a discriminatory manner such that the uninformed user does not
infer the side information, i.e., it is kept private, while providing utility
(fidelity) to both users? Two possible cases arise in this context depending
on whether the encoder is \textit{informed} or \textit{uninformed}, i.e., it
has or does not have access to the correlated side information, respectively.

This question is addressed from strictly a fidelity viewpoint by C. Heegard
and T. Berger in \cite{HeegardBerger}, henceforth referred to as the
Heegard-Berger problem, for the uninformed case and by A. Kaspi \cite{KaspiSI}%
, henceforth referred to as the Kaspi problem, for the informed case wherein
they determined the rate-distortion function for a discrete and memoryless
source pair. Using equivocation as the privacy metric, we address the question
posed above using the source network models in \cite{HeegardBerger} and
\cite{KaspiSI} with an additional constraint on the side information privacy
at the decoder without access to it, i.e., decoder 1 (see Fig. \ref{FigHB}).

We prove here that the encoding scheme for the Heegard-Berger problem achieves
the minimal rate while guaranteeing the maximal equivocation for any feasible
distortion pair at the two decoders when the encoder is uninformed. Informally
speaking, the Heegard-Berger coding scheme involves a combination of a
rate-distortion code and a conditional Wyner-Ziv code which is revealed to
both decoders. Our proof exploits the fact that conditioned on what is
decodable by decoder 1, i.e., the rate-distortion code, the additional
information intended for decoder 2, i.e. the conditional Wyner-Ziv bin index,
is asymptotically independent of the side information, $Y$ (see Fig.
\ref{FigHB}). Observing that the generation of the conditional Wyner-Ziv bin
index is analogous to the Slepian-Wolf binning scheme, we prove this
independence property for both the Slepian-Wolf and the Wyner-Ziv encoding.
Next, we prove a similar independence property for the Heegard-Berger coding
scheme, which in turn allows us to demonstrate the optimality of this scheme
for the problem studied in this paper.

On the other hand, for the informed encoder case, we present a modified coding
scheme (vis-\`{a}-vis the Kaspi scheme) which achieves the set of all feasible
rate-equivocation pairs for the desired fidelity requirements at the two
decoders. The Kaspi coding scheme exploits the encoder side information $Y$
(see Fig. \ref{FigHB}) via a combination of a rate-distortion code, intended
for decoder 1, and a conditional rate-distortion code, intended for decoder 2,
which is then revealed to both the decoders. However, conditioned on what is
decodable by decoder 1, i.e., the rate-distortion code, the conditional
rate-distortion code does not explicitly ensure the asymptotic independence of
the resulting index with the side information $Y$, and therefore, does not
simplify the equivocation computation at decoder 1. To resolve this
difficulty, we present a two-step encoding scheme in which the first step is
the same as in the Kaspi problem while in the second step we first choose the
codeword intended for decoder $2$ and then bin it. We prove that the resulting
conditional bin index is asymptotically independent of the side information
$Y$.

The last part of our paper focuses on a specific source model, a binary
equiprobable source $X$ with erased side information $Y$ (with erasure
probability $p)$ and Hamming distortion constraints. For this source pair, we
focus on the rate-distortion-equivocation tradeoffs for both the uninformed
and informed cases.

For the uninformed encoder case, we prove that the maximal equivocation is
independent of the fidelity requirement $D_{2}$ at decoder $2,$ i.e., the only
information leaked about the side information is a direct consequence of the
distortion requirement at decoder $1$. We also explicitly characterize the
rate-distortion-equivocation tradeoff for this problem over the space of all
achievable distortion pairs. Our results clearly demonstrate the optimality of
the Heegard-Berger encoding scheme from both rate and equivocation standpoints.

In contrast, for the informed encoder case, we explicitly demonstrate the
usefulness of encoder side information. We first prove that the set of
distortion pairs for which perfect equivocation is achievable at decoder 1 is
strictly larger than that for the uninformed case. We prove this by showing
that the informed encoder uses the side information $Y$ via a single
description which satisfies the distortion constraints at both the decoders
while simultaneously achieving perfect privacy at decoder 1. Furthermore, we
also demonstrate that access to side information leads to a tradeoff between
rate and equivocation. To guarantee a desired equivocation, we show that the
minimal rate required can be strictly larger than the rate-distortion function
for the original Kaspi problem.%

\begin{figure}
[ptb]
\begin{center}
\includegraphics[
height=2.5079in,
width=5.1785in
]%
{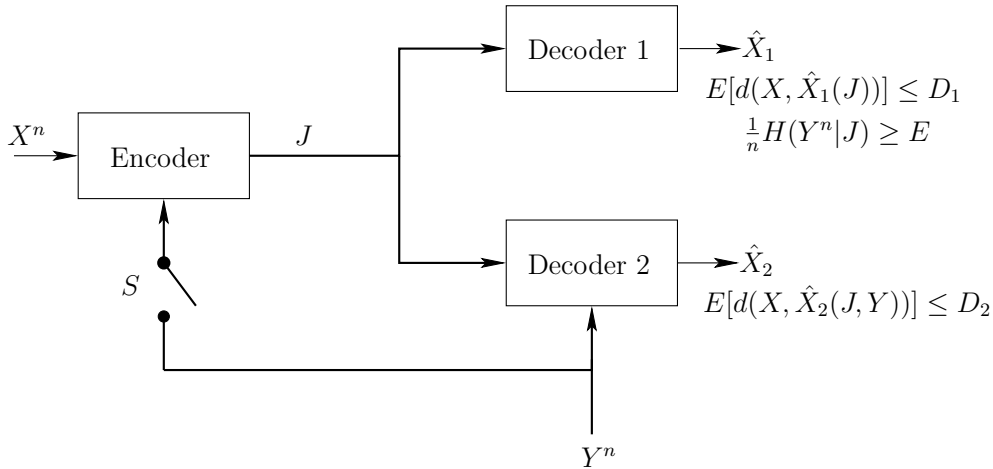}%
\caption{Source network model.}%
\label{FigHB}%
\end{center}
\end{figure}

The problem of source coding with equivocation constraints has gained
attention recently
\cite{Deniz:ISIT08,Grokop:ISIT05,JofPian,TandonISIT2011,SankarISIT2011,TandonIT,SRV4,PrabhakaranITW,Cuff,Yamamoto2Dec,MerhavArikan}%
. In contrast to these papers where the focus is on an external eavesdropper,
we address the problem of privacy leakage to a legitimate user, i.e., we seek
to understand whether the encoding at the source can discriminate between
legitimate users with and without access to correlated side information.
Furthermore, our results on the rate-distortion-equivocation tradeoff for a
binary symmetric source with erased side information for both the informed and
uninformed encoder cases allow a clear comparison of the results for the same
models without an additional privacy constraint as studied in
\cite{PerronISIT} and \cite{Perron}.

The paper is organized as follows. In Section \ref{sec:system}, we present the
system model. In Section \ref{sec:results}, we first prove the asymptotic
independence of the bin index and the decoder side information in the
Slepian-Wolf and Wyner-Ziv source coding problems. Subsequently, we establish
the rate-equivocation tradeoff regions for both the uninformed and informed
cases. In Section \ref{sec:compare}, we characterize the achievable
rate-distortion-equivocation tradeoff for a specific source pair $\left(
X,Y\right)  $ where $X$ is binary and $Y$ results from passing $X$ through an
erasure channel. We conclude in Section \ref{sec:conclusion}.

\section{\label{sec:system}System Model}

We consider a source network with a single encoder which observes and
communicates all or a part $\left(  X^{n}\right)  $ of a discrete, memoryless
bivariate source $\left(  X^{n},Y^{n}\right)  $ over a finite rate link to
decoders $1$ and $2$ at distortions $D_{1}$ and $D_{2}$, respectively, in
which decoder $2$ has access to $Y^{n}$ and an equivocation $E$ about $Y^{n}$
is required at decoder $1$. The network is shown in Fig. \ref{FigHB} where the
two cases with and without side information at the encoder correspond to the
switch $S$ being in the closed and open positions, respectively. Without the
equivocation constraint at decoder $1$, the problems with the switch in open
and closed positions, are the Heegard-Berger and Kaspi problems for which the
set of feasible $\left(  R,D_{1},D_{2}\right)  $ tuples are characterized by
Heegard and Berger \cite{HeegardBerger} and Kaspi \cite{KaspiSI},
respectively. We seek to characterize the set of all achievable $\left(
R,D_{1},D_{2},E\right)  $ tuples for both problems.

Formally, let $\left(  \mathcal{X},\mathcal{Y},p\left(  x,y\right)  \right)  $
denote the bivariate source with random variables $X\in\mathcal{X}$ and
$Y\in\mathcal{Y}$. Furthermore, let $\hat{X}_{1}$ and $\hat{X}_{2}$ denote the
reconstruction alphabets at decoders 1 and 2, respectively, and let $d_{1}$
and $d_{2}$ such that%
\begin{equation}
d_{k}:\mathcal{X}\times\mathcal{\hat{X}}\rightarrow\lbrack0,\infty),\text{
}k=1,2,
\end{equation}
be distortion measures associated with reconstruction of $X$ at decoders 1 and
2, respectively. Let $S$ take the values 0 and 1 to denote the open and closed
switch positions, respectively. An $(n,M,D_{1},D_{2},E)$ code for this network
consists of an encoder
\begin{equation}
f:\mathcal{X}^{n}\times S\cdot\mathcal{Y}^{n}\rightarrow\mathcal{J}%
=\{1,\ldots,M\}
\end{equation}
and two decoders,
\begin{align*}
g_{1} &  :\{1,\ldots,M\}\rightarrow\mathcal{\hat{X}}_{1}^{n},\text{ \ and}\\
g_{2} &  :\{1,\ldots,M\}\times\mathcal{Y}^{n}\rightarrow\mathcal{\hat{X}}%
_{2}^{n}.
\end{align*}
The expected distortion $D\,_{k}$ at decoder $k$ is given by
\begin{equation}
D_{k}=\mathbb{E}\frac{1}{n}%
{\textstyle\sum\limits_{i=1}^{n}}
d_{k}\left(  X_{i},\hat{X}_{i}\right)  ,\text{ }k=1,2,\label{Dist}%
\end{equation}
where $\hat{X}_{1}=g_{1}\left(  f\left(  X^{n}\right)  \right)  $, $\hat
{X}_{2}=g_{2}\left(  f\left(  X^{n}\right)  ,Y^{n}\right)  ,$ and the
equivocation rate $E$ is given by
\begin{equation}
E=\frac{1}{n}H\left(  Y^{n}|J\right)  ,\text{ }J\in\mathcal{J}.\label{Eq}%
\end{equation}

\begin{definition}
The rate-distortion-equivocation tuple $(R,D_{1},D_{2},E)$ is achievable for
the above source network if there exists an $(n,M,D_{1}+\epsilon
,D_{2}+\epsilon,E-\epsilon)$ code with $M\leq2^{n\left(  R+\epsilon\right)  }$
for $n$ sufficiently large. Let $\mathcal{R}$ denote the set of all achievable
$(R,D_{1},D_{2},E)$ tuples, $R\left(  D_{1},D_{2},E\right)  $ denote the
minimal achievable rate $R$, and $\Gamma\left(  D_{1},D_{2}\right)  $ denote
the maximal achievable equivocation $E$ such that
\begin{align}
R\left(  D_{1},D_{2},E\right)   &  \equiv\min_{\left(  R,D_{1},D_{2},E\right)
\in\mathcal{R}}R,\text{ \ and}\\
\Gamma\left(  D_{1},D_{2}\right)   &  \equiv\max_{\left(  R,D_{1}%
,D_{2},E\right)  \in\mathcal{R}\text{,}\forall R\geq0}E.
\end{align}

\end{definition}

\begin{remark}
$\Gamma\left(  D_{1},D_{2}\right)  $ is the maximal privacy achievable about
$Y^{n}$ at decoder 1 and $R\left(  D_{1},D_{2},E\right)  $ is the minimal rate
required to guarantee a distortion pair $\left(  D_{1},D_{2}\right)  $ and an
equivocation $E$. $R\left(  D_{1},D_{2},\Gamma(D_{1},D_{2})\right)  $ is the
minimal rate achieving the maximal equivocation for a distortion pair $\left(
D_{1},D_{2}\right)  .$
\end{remark}

\section{\label{sec:results}Related Observations}

In the context of lossless communications, \cite{Slepiansr:1973} studies a
problem of losslessly communicating a bivariate source $\left(  X,Y\right)  $
to a single decoder via two encoders, one with access to the $X^{n}$ sequences
and the other with access to the $Y^{n}$ sequences. A special case of this
problem is one in which the decoder has perfect access to $Y^{n}$ for which a
minimal rate of $R_{X}\geq H(X|Y)$ is needed \cite{Slepiansr:1973} and it this
problem (which leads to a corner point in the Slepian-Wolf region) that we
address below.

On the other hand, \cite{Wyner_Ziv} studies the problem of lossily
communicating a part $X$ of a bivariate source $\left(  X,Y\right)  $ subject
to a fidelity criterion to a single decoder which has access to $Y$ and proves
that a minimum rate of $R(D)\geq\min\left(  I(X;U)-I(Y;U)\right)  $ where the
minimization is over all distributions $p\left(  u|x\right)  $ and
deterministic functions $g$ such that $\hat{X}=g\left(  U,Y\right)  $ and
$\mathbb{E}\left[  d\left(  X,\hat{X}\right)  \right]  \leq D$.

In both of the abovementioned problems, the coding index communicated is
chosen with knowledge of the decoder side information. In the lemmas that
follow we prove that in both cases the optimal encoding is such that the
coding index is asymptotically independent of the side information $Y^{n}$ at
the decoder.

\subsection{Slepian-Wolf Coding Coding: Independence of Bin Index and Side
Information}

\begin{lemma}
\label{LemmaSW}For a bivariate source $\left(  X,Y\right)  $ where $X^{n}$ is
encoded via the encoding function $f_{SW}:\mathcal{X}^{n}\rightarrow
J\in\{1,\ldots,M_{J}\}$ while $Y^{n}$ is available only at the decoder, we
have $\lim_{n\rightarrow\infty}\left.  H\left(  Y^{n}|J\right)  \right/
n=H\left(  Y\right)  ,$ i.e., $\lim_{n\rightarrow\infty}\left.  I\left(
Y^{n};J\right)  \right/  n\rightarrow0$.
\end{lemma}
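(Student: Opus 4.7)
The plan is to show $I(Y^n; J)/n \to 0$ by combining the rate constraint (upper bound on $H(J)$) with a Fano-type lower bound on $H(J \mid Y^n)$, using the fact that in Slepian--Wolf coding at the corner-point rate $H(X|Y)$ the decoder reconstructs $X^n$ losslessly from $(J, Y^n)$.

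First, I would invoke the Slepian--Wolf theorem: at rate $\tfrac{1}{n}\log M_J \le H(X|Y)+\epsilon$ there exists a decoder $\varphi: \{1,\ldots,M_J\} \times \mathcal{Y}^n \to \mathcal{X}^n$ whose probability of error in recovering $X^n$ from $(J, Y^n)$ vanishes. Applying Fano's inequality yields $H(X^n \mid J, Y^n) \le n\delta_n$ with $\delta_n \to 0$.

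Next, I would exploit that $J = f_{SW}(X^n)$ is a deterministic function of $X^n$, so $H(J \mid X^n, Y^n) = 0$. The chain rule then gives
\begin{equation}
H(J \mid Y^n) = H(J, X^n \mid Y^n) - H(X^n \mid J, Y^n) = H(X^n \mid Y^n) - H(X^n \mid J, Y^n) \ge nH(X|Y) - n\delta_n.
\end{equation}
Combining this with the trivial bound $H(J) \le \log M_J \le n(H(X|Y)+\epsilon)$ yields
\begin{equation}
\tfrac{1}{n} I(Y^n; J) = \tfrac{1}{n}\bigl(H(J) - H(J\mid Y^n)\bigr) \le \epsilon + \delta_n,
\end{equation}
and letting $n \to \infty$ (and then $\epsilon \to 0$) shows $I(Y^n;J)/n \to 0$. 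Since $H(Y^n) = nH(Y)$, this is equivalent to $H(Y^n|J)/n \to H(Y)$, as claimed.

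I do not anticipate a significant obstacle: the argument is a short entropy manipulation built on Fano's inequality and the operational meaning of the Slepian--Wolf rate. The only subtle point is remembering that $J$ is deterministic in $X^n$ (not in $Y^n$), which is what lets $H(X^n \mid Y^n)$ serve as the lower bound on $H(J \mid Y^n)$ and thereby force $H(J \mid Y^n)$ to essentially match $H(J)$ in the large-$n$ limit.
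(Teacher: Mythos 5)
Your proof is correct, but it takes a genuinely different route from the paper's. The paper's argument is a direct typicality computation: it conditions on a typicality indicator $\mu$, uses the uniformity of the random bin assignment to show that $\Pr(y^n \mid j, \mu=1) \leq 2^{-n(H(Y)-\epsilon'')}$ for every bin index $j$, and then converts this pointwise upper bound on the conditional pmf into a lower bound on the conditional entropy. Your argument instead is an operational one: it invokes Slepian--Wolf decodability, applies Fano's inequality to get $H(X^n\mid J,Y^n)\le n\delta_n$, and squeezes $H(J)-H(J\mid Y^n)$ between the rate ceiling $n(H(X|Y)+\epsilon)$ and the chain-rule floor $nH(X|Y)-n\delta_n$. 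Both are valid, but there are two trade-offs worth noting. First, your proof requires the encoder to operate essentially at the corner-point rate and a decoder that succeeds; the paper's proof requires only random binning and works for \emph{any} $M_J$, a fact the authors explicitly flag in the remark immediately following the lemma (``it holds for all choices of $M_J$''), and which they need when reusing the idea inside the Heegard--Berger and Kaspi schemes where the bin-index rate is set by other considerations. If $M_J$ is taken appreciably larger than $2^{nH(X|Y)}$, your bound $\tfrac1n I(Y^n;J)\le \epsilon + \delta_n$ degrades while the paper's does not. Second, your double limit ($n\to\infty$, then $\epsilon\to0$) should be stated a bit more carefully -- one needs a sequence of SW codes with rates $R_n \downarrow H(X|Y)$ and $P_e^{(n)}\to 0$ simultaneously, which is achievable but is an extra step -- whereas the paper's $\epsilon,\epsilon''$ are typicality parameters that shrink with $n$ within a single code construction. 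On the plus side, your argument is shorter, entirely elementary (no typicality bookkeeping), and applies to any good SW code rather than only to the random-binning ensemble.
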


\begin{proof}
Let $\mathcal{T}_{A}\left(  n,\epsilon\right)  $ denote the set of strongly
typical $A$ sequences of length $n$. We define a binary random variable $\mu$
as follows:%
\begin{equation}
\mu\left(  x^{n},y^{n}\right)  =\left\{
\begin{array}
[c]{ll}%
0, & \left(  x^{n},y^{n}\right)  \not \in \mathcal{T}_{XY}\left(
n,\epsilon\right)  ;\\
1, & \text{otherwise.}%
\end{array}
\right.  \label{mu_def}%
\end{equation}
From the Slepian-Wolf encoding, since a typical sequence $x^{n}$ is assigned a
bin (index) $j$ at random, we have that
\begin{equation}
\Pr\left(  J=j|X^{n}=x^{n}\in\mathcal{T}_{X}\left(  n,\epsilon\right)
\right)  =\frac{1}{M_{J}} \label{PrJ}%
\end{equation}
and
\begin{equation}
\Pr\left(  J=j|\mu=1\right)  =%
{\textstyle\sum\limits_{x^{n}}}
\Pr\left(  x^{n},J=j|\mu=1\right)  \in\left(  \left(  1-\epsilon\right)
/M_{j},1/M_{J}\right)  \label{PrJSW}%
\end{equation}
where we have used the fact that for a typical set $\Pr\left(  \mathcal{T}%
_{XY}\left(  n,\epsilon\right)  \right)  \geq\left(  1-\epsilon\right)  $
\cite[chap. 2]{Cover:book}.

The conditional equivocation $H\left(  Y^{n}|J\right)  $ can be lower bounded
as%
\begin{align}
H\left(  Y^{n}|J\right)   &  \geq H\left(  Y^{n}|J,\mu\right)  \label{HJ_1}\\
&  =\Pr\left(  \mu=0\right)  H\left(  Y^{n}|J,\mu=0\right)  +\Pr\left(
\mu=1\right)  H\left(  Y^{n}|J,\mu=1\right)  \nonumber\\
&  \geq\Pr\left(  \mu=1\right)  H\left(  Y^{n}|J,\mu=1\right)  \label{HJ_3}\\
&  =\Pr\left(  \mu=1\right)
{\textstyle\sum_{j}}
\Pr\left(  j|\mu=1\right)  H\left(  Y^{n}|j,\mu=1\right)  \label{HJ_4}%
\end{align}
where (\ref{HJ_1}) follows from the fact that conditioning does not increase
entropy, and (\ref{HJ_3}) from the fact that the entropy is non-negative. The
probability $\Pr\left(  y^{n}|j,\mu=1\right)  $ can be written as
\begin{subequations}
\label{SWPrY}%
\begin{align}
&  \Pr\left(  y^{n}|j,\mu=1\right)  \nonumber\\
&  =%
{\textstyle\sum\limits_{x^{n}}}
\Pr\left(  y^{n},x^{n}|j,\mu=1\right)  \\
&  =%
{\textstyle\sum\limits_{x^{n}}}
\Pr\left(  x^{n}|j,\mu=1\right)  \Pr\left(  y^{n}|x^{n},j,\mu=1\right)
\label{HJ_5}\\
&  =%
{\textstyle\sum\limits_{x^{n}}}
\frac{\Pr\left(  x^{n},j|\mu=1\right)  }{\Pr\left(  j|\mu=1\right)  }%
\Pr\left(  y^{n}|x^{n},\mu=1\right)  \\
&  \leq2^{n\epsilon^{\prime}}%
{\textstyle\sum\limits_{x^{n}}}
\frac{\Pr\left(  x^{n}|\mu=1\right)  /M_{J}}{M_{J}}\Pr\left(  y^{n}|x^{n}%
,\mu=1\right)  \label{HJ_5a}\\
&  =2^{n\epsilon^{\prime}}%
{\textstyle\sum\limits_{x^{n}}}
\Pr\left(  x^{n}|\mu=1\right)  \Pr\left(  y^{n}|x^{n},\mu=1\right)  \\
&  =2^{n\epsilon^{\prime}}\Pr\left(  y^{n}|\mu=1\right)  \\
&  \leq2^{-n\left(  H(Y)-\epsilon^{\prime\prime}\right)  }\label{HJ_Prend}%
\end{align}
where (\ref{HJ_5}) follows from (\ref{PrJ}) and the fact that $Y^{n}-X^{n}-J$
forms a Markov chain (by construction), and (\ref{HJ_5a}) follows from
(\ref{PrJSW}). Expanding $H\left(  Y^{n}|j,\mu=1\right)  $, we have%
\end{subequations}
\begin{align}
H\left(  Y^{n}|j,\mu=1\right)   &  =%
{\textstyle\sum\limits_{y^{n}}}
p\left(  y^{n}|j,\mu=1\right)  \log\frac{1}{\Pr\left(  y^{n}|j,\mu=1\right)
}\\
&  \geq%
{\textstyle\sum\limits_{y^{n}}}
p\left(  y^{n}|j,\mu=1\right)  \log2^{n\left(  H(Y)-\epsilon^{\prime\prime
}\right)  }\label{SWHY_2}\\
&  =n\left(  H(Y)-\epsilon^{\prime\prime}\right)
{\textstyle\sum\limits_{y^{n}}}
p\left(  y^{n}|j,\mu=1\right)  \\
&  \geq n\left(  1-\epsilon\right)  \left(  H(Y)-\epsilon^{\prime\prime
}\right)  \label{SWHY_4}%
\end{align}
where (\ref{SWHY_2}) results from the upper bound on $\Pr\left(  y^{n}%
|j,\mu=1\right)  $ in (\ref{HJ_Prend}) and (\ref{SWHY_4}) from the fact that
for a typical set $\Pr\left(  \mathcal{T}_{XY}\left(  n,\epsilon\right)
\right)  \geq\left(  1-\epsilon\right)  $ \cite[chap. 2]{Cover:book}. Thus,
the equivocation $H\left(  Y^{n}|J\right)  $ can be lower bounded as
\begin{align}
H\left(  Y^{n}|J\right)   &  \geq\Pr\left(  \mu=1\right)
{\textstyle\sum_{j}}
\Pr\left(  j|\mu=1\right)  \left(  1-\epsilon\right)  n\left(  H\left(
Y\right)  -\epsilon^{\prime\prime}\right)  \label{SWPrYB}\\
&  \geq n\left(  1-\epsilon\right)  ^{3}\left(  H\left(  Y\right)
-\epsilon^{\prime\prime}\right)
\end{align}
where we have used (\ref{PrJSW}) and the fact that for a typical set
$\Pr\left(  \mathcal{T}_{XY}\left(  n,\epsilon\right)  \right)  \geq\left(
1-\epsilon\right)  $ \cite[chap. 2]{Cover:book}. The proof concludes by
observing that $H\left(  Y^{n}\right)  \geq H\left(  Y^{n}|J\right)  $ and
$\epsilon\rightarrow0,$ $\epsilon^{\prime\prime}\rightarrow0$ as
$n\rightarrow\infty$.
\end{proof}

\begin{remark}
Lemma \ref{LemmaSW} captures the intuition that it suffices to encode only
that part of $X^{n}$ that is independent of the decoder side-information
$Y^{n}.$
\end{remark}

\begin{remark}
The proof of Lemma \ref{LemmaSW} does not depend on the precise bound on the
total number, $M_{J}$, of encoding indices, i.e., it holds for all choices of
$M_{J}$. In fact, the bound on $M_{J}$ is a consequence of the decoding requirements.
\end{remark}

\subsection{Wyner-Ziv Coding: Independence of Bin Index and Side Information}

\begin{lemma}
\label{LemmaWZ}For a bivariate source $\left(  X,Y\right)  $ where $X^{n}$ is
encoded via the encoding function $f_{WZ}:\mathcal{X}^{n}\rightarrow
J\in\{1,\ldots,M_{J}\}$ while $Y^{n}$ is available only at the decoder, we
have $\lim_{n\rightarrow\infty}\left.  H\left(  Y^{n}|J\right)  \right/
n=H\left(  Y\right)  ,$ i.e., $\lim_{n\rightarrow\infty}\left.  I\left(
Y^{n};J\right)  \right/  n\rightarrow0$.
\end{lemma}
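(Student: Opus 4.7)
The plan is to mirror the proof of Lemma \ref{LemmaSW} closely, with the auxiliary-codebook layer of the Wyner-Ziv encoder absorbed into the overall mapping $X^{n}\mapsto J$. Recall that the Wyner-Ziv encoder first generates a codebook of about $2^{n(I(X;U)+\epsilon)}$ codewords $U^{n}$ i.i.d.\ $\sim p(u)$, partitions these codewords uniformly at random into $M_{J}$ bins, and, given $X^{n}=x^{n}$, selects any $u^{n}$ jointly typical with $x^{n}$ (which exists with probability tending to $1$ by the covering lemma) and outputs its bin index $J$. Because the bin label assigned to each codeword is drawn uniformly at random and independently of the codeword-selection rule, the composite mapping $X^{n}\mapsto J$ satisfies, averaged over the random codebook, $\Pr(J=j\mid X^{n}=x^{n}) = 1/M_{J}$ up to a vanishing multiplicative slack for every typical $x^{n}$, which is the exact analogue of (\ref{PrJ}). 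Moreover, $J$ is a (codebook-dependent) function of $X^{n}$, so $Y^{n}-X^{n}-J$ continues to form a Markov chain, exactly as in the Slepian-Wolf setting.

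These are precisely the two structural facts needed to run the argument of Lemma \ref{LemmaSW}. With them in hand, I would introduce the same joint-typicality indicator $\mu$ from (\ref{mu_def}) and reproduce the calculation of (\ref{PrJSW}) to obtain $\Pr(j\mid\mu=1)\in\bigl((1-\epsilon)/M_{J},\,1/M_{J}\bigr)$. The lower-bound chain (\ref{HJ_1})--(\ref{HJ_4}) for $H(Y^{n}|J)$ then transfers verbatim, and the decomposition (\ref{SWPrY}) produces the same upper bound $\Pr(y^{n}\mid j,\mu=1)\leq 2^{-n(H(Y)-\epsilon'')}$: the Markov chain $Y^{n}-X^{n}-J$ yields $\Pr(y^{n}\mid x^{n},j,\mu=1)=\Pr(y^{n}\mid x^{n},\mu=1)$, while the ratio $\Pr(x^{n},j\mid\mu=1)/\Pr(j\mid\mu=1)$ is bounded exactly as in (\ref{HJ_5a}). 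Feeding this into (\ref{SWHY_2})--(\ref{SWPrYB}) gives $\frac{1}{n}H(Y^{n}\mid J)\geq (1-\epsilon)^{3}(H(Y)-\epsilon'')$, which tends to $H(Y)$ as $n\to\infty$, completing the proof.

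I expect the only genuinely new step, and thus the main (and fairly minor) obstacle, to be verifying the uniform bin-probability claim in the Wyner-Ziv scheme. The key observation is that although the map $x^{n}\mapsto u^{n}$ induced by joint-typicality selection is neither injective nor data-independent, the subsequent uniform random binning is performed independently of that selection; hence, over the joint randomness of the codebook and its binning, the bin label of the selected codeword is uniform on $\{1,\ldots,M_{J}\}$ regardless of which $u^{n}$ is chosen, modulo the vanishing probability event that no jointly typical codeword exists. Once this is noted, the Slepian-Wolf template of Lemma \ref{LemmaSW} transfers without further modification, and the extra layer $U^{n}$ plays no role beyond ensuring the Markov chain $Y^{n}-X^{n}-J$.
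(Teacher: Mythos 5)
Your proof is correct in spirit but takes a genuinely different route from the paper's. The paper's proof of Lemma~\ref{LemmaWZ} does \emph{not} simply re-run the Slepian-Wolf argument with $X^{n}$: instead it systematically replaces $X^{n}$ by the selected codeword $U^{n}$. Concretely, the paper defines $\mu$ on the event $(u^{n},y^{n})\in\mathcal{T}_{UY}(n,\epsilon)$, decomposes $\Pr(y^{n}\mid j,\mu=1)$ as a sum over $u^{n}$ (not $x^{n}$), invokes the Markov chain $Y^{n}-U^{n}-J$ in (\ref{PrYWZ1}), and uses the exact uniformity $\Pr(J=j\mid U^{n}=u^{n}\in\mathcal{T}_{U})=1/M_{J}$ from (\ref{PrJWZ}). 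You instead keep the Slepian-Wolf structure literally, summing over $x^{n}$, invoking $Y^{n}-X^{n}-J$, and replacing exact uniformity by the approximate claim $\Pr(J=j\mid X^{n}=x^{n})\approx 1/M_{J}$ averaged over the random codebook. Both Markov chains hold, and both decompositions lead to the same bound $\Pr(y^{n}\mid j,\mu=1)\leq 2^{-n(H(Y)-\epsilon'')}$, so the arithmetic indeed transfers. Your route has the virtue of re-using Lemma~\ref{LemmaSW} almost verbatim without introducing the auxiliary layer explicitly; the paper's route has the virtue of making the bin-uniformity exact rather than approximate.

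The one place where you owe the reader a little more care is the uniformity claim itself, and the paper's choice of $\mu$ is precisely what sidesteps it. Since $(u^{n},y^{n})\in\mathcal{T}_{UY}$ forces $u^{n}$ to be typical, the paper's conditioning event automatically excludes encoder failure, and (\ref{PrJWZ}) holds exactly. Your $\mu$ on $(x^{n},y^{n})$ typicality does not rule out the event that no jointly typical $u^{n}$ is found in the codebook: if the encoder then outputs a default index $j_{0}$, one has $\Pr(J=j_{0}\mid X^{n}=x^{n})=(1-\delta_{n})/M_{J}+\delta_{n}$, whose multiplicative deviation from $1/M_{J}$ is $1+\delta_{n}(M_{J}-1)$, and this is ``vanishing'' only because the covering-failure probability $\delta_{n}$ decays doubly-exponentially while $M_{J}$ grows only singly-exponentially. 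You do flag this (``modulo the vanishing probability event''), but as written it is asserted rather than verified. Two clean fixes: either (i) have the encoder output a uniformly random bin on failure, making $\Pr(J=j\mid X^{n}=x^{n})=1/M_{J}$ exact for every $x^{n}$, or (ii) augment your $\mu$ to also require that a jointly typical $u^{n}$ was found, which recovers exactly what the paper's $(u^{n},y^{n})$-based $\mu$ gives you for free. With either fix your argument is complete and, arguably, slightly more self-contained than the paper's since the auxiliary variable $U^{n}$ never appears.
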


\begin{proof}
Let $T_{A}\left(  n,\epsilon\right)  $ denote the set of strongly typical $A$
sequences of length $n$. We define a binary random variable $\mu$ as follows:%
\begin{equation}
\mu\left(  u^{n},y^{n}\right)  =\left\{
\begin{array}
[c]{ll}%
0, & \left(  u^{n},y^{n}\right)  \not \in \mathcal{T}_{UY}\left(
n,\epsilon\right)  ;\\
1, & \text{otherwise.}%
\end{array}
\right.
\end{equation}
From the Wyner-Ziv encoding, for a given $x^{n}$, first a sequence $u^{n}$
that is jointly typical with $x^{n}$ is chosen where the $n$ symbols of
$u^{n}$ are generated independently according to $p_{U}\left(  \cdot\right)  $
(computed from $p_{XU}\left(  \cdot\right)  )$. The resulting sequence $u^{n}$
is assigned a bin (index) $j$ at random such that we have
\begin{equation}
\Pr\left(  J=j|U^{n}=u^{n}\in\mathcal{T}_{U}\left(  n,\epsilon\right)
\right)  =\frac{1}{M_{J}}\label{PrJWZ}%
\end{equation}
and
\begin{equation}
\Pr\left(  J=j|\mu=1\right)  =%
{\textstyle\sum\limits_{u^{n}}}
\Pr\left(  u^{n},J=j|\mu=1\right)  \in\left(  \left(  1-\epsilon\right)
/M_{j},1/M_{J}\right)  \label{PrJWZa}%
\end{equation}
where we have used the fact that the probability of the typical set
$\mathcal{T}_{UY}\left(  n,\epsilon\right)  \geq\left(  1-\epsilon\right)  $
\cite[chap. 2]{Cover:book} and using $\left(  1-\epsilon\right)
/M_{j}=2^{-n\epsilon^{\prime}}/M_{j}$ for a given $n$.

The conditional equivocation $H\left(  Y^{n}|J\right)  $ can be lower bounded
as%
\begin{align}
H\left(  Y^{n}|J\right)   &  \geq H\left(  Y^{n}|J,\mu\right)  \label{WZHJ1}\\
&  =\Pr\left(  \mu=0\right)  H\left(  Y^{n}|J,\mu=0\right)  +\Pr\left(
\mu=1\right)  H\left(  Y^{n}|J,\mu=1\right)  \nonumber\\
&  \geq\Pr\left(  \mu=1\right)  H\left(  Y^{n}|J,\mu=1\right)  \label{WZHJ3}\\
&  =\Pr\left(  \mu=1\right)
{\textstyle\sum_{j}}
\Pr\left(  j|\mu=1\right)  H\left(  Y^{n}|j,\mu=1\right)
\end{align}
where (\ref{HJ_1}) follows from the fact that conditioning reduces entropy,
and (\ref{HJ_3}) from the fact that the entropy is non-negative. The
probability $\Pr\left(  y^{n}|j,\mu=1\right)  $ can be written as
\begin{subequations}
\begin{align}
&  \Pr\left(  y^{n}|j,\mu=1\right)  \\
&  =%
{\textstyle\sum\limits_{u^{n}}}
\Pr\left(  y^{n},u^{n}|j,\mu=1\right)  \\
&  =%
{\textstyle\sum\limits_{u^{n}}}
\Pr\left(  u^{n}|j,\mu=1\right)  \Pr\left(  y^{n}|u^{n},j,\mu=1\right)  \\
&  =%
{\textstyle\sum\limits_{u^{n}}}
\Pr\left(  u^{n}|j,\mu=1\right)  \Pr\left(  y^{n}|u^{n},\mu=1\right)
\label{PrYWZ1}\\
&  =%
{\textstyle\sum\limits_{u^{n}}}
\frac{\Pr\left(  u^{n}|\mu=1\right)  }{\Pr\left(  j|\mu=1\right)  }\frac
{1}{M_{J}}\Pr\left(  y^{n}|u^{n},\mu=1\right)  \\
&  \leq%
{\textstyle\sum\limits_{u^{n}}}
\Pr\left(  u^{n}|\mu=1\right)  2^{n\epsilon^{\prime}}\Pr\left(  y^{n}%
|u^{n},\mu=1\right)  \label{PrYWZ3}\\
&  =%
{\textstyle\sum\limits_{u^{n}}}
\Pr\left(  y^{n},u^{n}|\mu=1\right)  2^{n\epsilon^{\prime}}\\
&  =\Pr\left(  y^{n}|\mu=1\right)  2^{n\epsilon^{\prime}}\\
&  \leq2^{-n\left(  H\left(  Y\right)  -\epsilon^{\prime\prime}\right)
}\label{PrYWZend}%
\end{align}
where (\ref{PrYWZ1}) follows from (\ref{PrJWZ}) and the fact that $Y^{n}%
-U^{n}-J$ forms a Markov chain (by construction) and (\ref{PrYWZ3}) follows
from (\ref{PrJWZa}). Expanding $H\left(  Y^{n}|j,\mu=1\right)  $, we have%
\end{subequations}
\begin{align}
H\left(  Y^{n}|j,\mu=1\right)   &  =%
{\textstyle\sum\limits_{y^{n}}}
p\left(  y^{n}|j,\mu=1\right)  \log\frac{1}{\Pr\left(  y^{n}|j,\mu=1\right)
}\\
&  \geq%
{\textstyle\sum\limits_{y^{n}}}
p\left(  y^{n}|j,\mu=1\right)  \log2^{n\left(  H(Y)-\epsilon^{\prime}\right)
}\\
&  =n\left(  H(Y)-\epsilon^{\prime}\right)
{\textstyle\sum\limits_{y^{n}}}
p\left(  y^{n}|j,\mu=1\right)  \\
&  \geq n\left(  1-\epsilon\right)  \left(  H(Y)-\epsilon^{\prime}\right)
\end{align}
where (\ref{SWHY_2}) results from the upper bound on $\Pr\left(  y^{n}%
|j,\mu=1\right)  $ in (\ref{PrYWZend}) and (\ref{SWHY_4}) from the fact that
for a typical set $T_{XY}\left(  n,\epsilon\right)  \geq\left(  1-\epsilon
\right)  $ \cite[chap. 2]{Cover:book}. Thus, the equivocation $H\left(
Y^{n}|J\right)  $ can be lower bounded as
\begin{align}
H\left(  Y^{n}|J\right)   &  \geq\Pr\left(  \mu=1\right)
{\textstyle\sum_{j}}
\Pr\left(  j|\mu=1\right)  \left(  1-\epsilon\right)  n\left(  H\left(
Y\right)  -\epsilon^{\prime}\right)  \\
&  \geq n\left(  1-\epsilon\right)  ^{3}\left(  H\left(  Y\right)
-\epsilon^{\prime}\right)
\end{align}
where we have used the fact that for a typical set $\mathcal{T}_{UY}\left(
n,\epsilon\right)  \geq\left(  1-\epsilon\right)  $ \cite[chap. 2]%
{Cover:book}. The proof concludes by observing that $H\left(  Y^{n}\right)
\geq H\left(  Y^{n}|J\right)  $ and $\epsilon^{\prime}\rightarrow0,$
$\epsilon^{\prime\prime}\rightarrow0$ as $n\rightarrow\infty$.
\end{proof}

We will now use Lemmas \ref{LemmaSW} and \ref{LemmaWZ} to demonstrate the
optimality of the Heegard-Berger and Kaspi encoding for the uninformed and
informed source models respectively.

\subsection{Uninformed Encoder with Side Information\ Privacy}

We first consider the source network in which the encoder does not have side
information and derive the set of all feasible rate-distortion-equivocation
(RDE) pairs. The resulting problem may be viewed as the Heegard-Berger problem
with an additional privacy constraint at decoder $1.$ Our result demonstrates
that the optimal coding scheme is the same as the Heegard-Berger problem
without a privacy constraint. The proof makes use of the independence of the
Wyner-Ziv binning index from the side information $Y^{n}$ in tightly bounding
the achievable equivocation. We briefly sketch the proof here; the detailed
proof can be found in the appendix.

\subsubsection{Rate-Distortion-Equivocation $\left(  R,D_{1},D_{2},E\right)  $
Tuples}

\begin{definition}
\label{DefRDEHB}Let $\Gamma_{U}(D_{1},D_{2})$ and $R_{U}\left(  D_{1}%
,D_{2},E\right)  $ be two functions defined as
\begin{align}
\Gamma_{U}\left(  D_{1},D_{2}\right)   &  \equiv\max_{\mathcal{P}_{U}%
(D_{1},D_{2},E)}H(Y|W_{1})\label{Gamma_RDE}\\
R_{U}\left(  D_{1},D_{2},E\right)   &  \equiv\min_{\mathcal{P}_{U}(D_{1}%
,D_{2},E)}I(X;W_{1})+I(X;W_{2}|W_{1}Y)\label{Rate_RDE}%
\end{align}
such that
\begin{equation}
\mathcal{R}_{U}\equiv\left\{  \left(  R,D_{1},D_{2},E\right)  :D_{1}%
\geq0,D_{2}\geq0,0\leq E\leq\Gamma_{U}\left(  D_{1},D_{2}\right)  ,R\geq
R_{U}\left(  D_{1},D_{2},E\right)  \right\}
\end{equation}
where the subscript $U$ denotes the uninformed case, $\mathcal{P}_{U}\left(
D_{1},D_{2},E\right)  $ is the set of all $p\left(  x,y)p(w_{1},w_{2}%
|x\right)  $ that satisfy (\ref{Dist}) and (\ref{Eq}), $Y-X-\left(
W_{1},W_{2}\right)  $ is a Markov chain, and $\left\vert \mathcal{W}%
_{1}\right\vert =\left\vert \mathcal{X}\right\vert +2,$ $\left\vert
\mathcal{W}_{2}\right\vert =\left(  \left\vert \mathcal{X}\right\vert
+1\right)  ^{2}$.
\end{definition}

\begin{lemma}
\label{Lemma_TD}$\Gamma_{U}\left(  D_{1},D_{2}\right)  $ is a non-decreasing,
concave function of $\left(  D_{1},D_{2}\right)  $ (i.e., for all $D_{l}\geq
0$, \thinspace$l=1,2$)$.$
\end{lemma}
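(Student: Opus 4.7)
The plan is to establish monotonicity by a direct inclusion argument on the feasible sets $\mathcal{P}_U(D_1,D_2,E)$, and concavity by a standard time-sharing construction.

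For monotonicity, I would fix any $(D_1',D_2')$ with $D_1'\geq D_1$ and $D_2'\geq D_2$ and observe that any admissible joint distribution $p(x,y)p(w_1,w_2|x)$ together with reconstruction functions $\hat X_1(w_1)$, $\hat X_2(w_2,y)$ satisfying $\mathbb{E}[d_k(X,\hat X_k)]\leq D_k$ automatically satisfies the looser constraints $\mathbb{E}[d_k(X,\hat X_k)]\leq D_k'$. Hence $\mathcal{P}_U(D_1,D_2,E)\subseteq\mathcal{P}_U(D_1',D_2',E)$, and taking the maximum of $H(Y|W_1)$ over the larger set can only grow. This yields $\Gamma_U(D_1',D_2')\geq \Gamma_U(D_1,D_2)$.

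For concavity, I would proceed by time-sharing. Fix two distortion pairs $(D_1^{(0)},D_2^{(0)})$ and $(D_1^{(1)},D_2^{(1)})$ and let $(W_1^{(i)},W_2^{(i)})$, $i=0,1$, together with reconstruction functions $\hat X_1^{(i)},\hat X_2^{(i)}$, be (near-)optimal distributions achieving $\Gamma_U(D_1^{(i)},D_2^{(i)})$. Introduce an auxiliary time-sharing variable $Q\in\{0,1\}$ independent of $(X,Y)$ with $\Pr(Q=1)=\lambda$, and define
\begin{equation}
\tilde W_1=(W_1^{(Q)},Q),\qquad \tilde W_2=W_2^{(Q)},
\end{equation}
with reconstructions $\tilde{\hat X}_1(\tilde w_1)=\hat X_1^{(q)}(w_1^{(q)})$ and $\tilde{\hat X}_2(\tilde w_2,y)$ defined using $Q$ (available either explicitly through $\tilde W_1$ or recoverable from $\tilde W_2$ if one assigns disjoint alphabets). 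Because $Q$ is independent of $(X,Y)$, the required Markov chain $Y-X-(\tilde W_1,\tilde W_2)$ still holds, and the expected distortions become the convex combinations $(1-\lambda)D_k^{(0)}+\lambda D_k^{(1)}$ for $k=1,2$. A direct computation then gives
\begin{equation}
H(Y|\tilde W_1)=H(Y|W_1^{(Q)},Q)=(1-\lambda)H(Y|W_1^{(0)})+\lambda H(Y|W_1^{(1)}),
\end{equation}
which equals $(1-\lambda)\Gamma_U(D_1^{(0)},D_2^{(0)})+\lambda\Gamma_U(D_1^{(1)},D_2^{(1)})$. Since $(\tilde W_1,\tilde W_2)$ is a feasible test channel at distortion level $((1-\lambda)D_1^{(0)}+\lambda D_1^{(1)},(1-\lambda)D_2^{(0)}+\lambda D_2^{(1)})$, the definition of $\Gamma_U$ as a maximum yields the desired concavity inequality.

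The only subtlety I anticipate is that enlarging $W_1$ to include $Q$ a priori violates the cardinality constraint $|\mathcal{W}_1|\leq|\mathcal{X}|+2$ stated in Definition \ref{DefRDEHB}. I would handle this in the standard way: after the time-sharing argument establishes feasibility of some distribution achieving the convex combination of equivocations, invoke the support lemma (Fenchel--Eggleston--Caratheodory) to reduce $|\tilde W_1|$ back to $|\mathcal{X}|+2$ while preserving $p(x,y)$, the distortions, and $H(Y|\tilde W_1)$; analogously for $\tilde W_2$. This is the main technical point, but it is routine given that the quantities to be preserved (the $|\mathcal{X}|-1$ values of $p(x)$, the two distortion expectations, and $H(Y|\tilde W_1)$) are finite in number. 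With monotonicity and concavity both established, the lemma follows.
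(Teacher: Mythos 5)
Your proof is correct and fills in exactly the steps that the paper compresses into a single sentence ("follows from the concavity properties of the conditional entropy function as a function of the underlying distribution, and therefore, of the distortion"). The set-inclusion argument for monotonicity and the time-sharing construction with an auxiliary $Q$ appended to $W_1$, followed by a Fenchel--Eggleston--Carath\'eodory reduction to restore the cardinality bound, are precisely the standard way to make that remark rigorous, so your route and the paper's coincide.
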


Lemma \ref{Lemma_TD} follows from the concavity properties of the
(conditional) entropy function as a function of the underlying distribution,
and therefore, of the distortion.

\begin{theorem}
\label{ThHBRDE}For a bivariate source $\left(  X,Y\right)  $ where only
$X^{n}$ is available at the source, and $Y^{n}$ is available at decoder 2 but
not at decoder 1, we have%
\begin{equation}%
\begin{array}
[c]{ccc}%
\mathcal{R}=\mathcal{R}_{U}, & \Gamma\left(  D_{1},D_{2}\right)  =\Gamma
_{U}\left(  D_{1},D_{2}\right)  \text{,} & \text{and }R\left(  D_{1}%
,D_{2},E\right)  =R_{U}\left(  D_{1},D_{2},E\right)  .
\end{array}
\end{equation}

\end{theorem}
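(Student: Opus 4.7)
The plan is to prove $\mathcal{R} = \mathcal{R}_U$ via matching achievability and converse bounds, from which the equalities $\Gamma = \Gamma_U$ and $R(D_1,D_2,E) = R_U(D_1,D_2,E)$ follow as corollaries. The novelty is in the equivocation analysis, where Lemma~\ref{LemmaWZ} is applied conditionally to show that the unmodified Heegard-Berger scheme is simultaneously optimal in rate and in privacy.

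\emph{Achievability.} Fix a law $p(w_1,w_2|x)$ attaining the extrema in Definition~\ref{DefRDEHB}. The encoding is the standard two-layer Heegard-Berger construction: an i.i.d.\ outer codebook of $\approx 2^{nI(X;W_1)}$ sequences $w_1^n$, and, for each $w_1^n$, an inner codebook of $\approx 2^{nI(X;W_2|W_1)}$ sequences $w_2^n$ that is then randomly binned into $\approx 2^{nI(X;W_2|W_1,Y)}$ bins. On input $x^n$ the encoder sends $J=(J_1,J_2)$, the index of a jointly typical outer codeword together with the bin index of a jointly typical inner codeword; standard joint-typicality arguments give the distortion guarantees and total rate $I(X;W_1)+I(X;W_2|W_1,Y)+\delta$. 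For the equivocation, since $W_1^n$ is determined by $J_1$,
\[
H(Y^n|J) \;=\; H(Y^n|W_1^n,J_2) \;=\; H(Y^n|W_1^n) - I(Y^n;J_2|W_1^n).
\]
Joint typicality gives $H(Y^n|W_1^n) \geq n(H(Y|W_1)-\epsilon)$. Conditioned on any typical $w_1^n$, the inner-layer encoding is exactly a Wyner-Ziv code for $X$ against side information $Y$ with auxiliary $W_2$, so a conditional version of Lemma~\ref{LemmaWZ} delivers $I(Y^n;J_2|W_1^n)/n \to 0$. Hence $H(Y^n|J)/n \to H(Y|W_1)$, matching $\Gamma_U$.

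\emph{Converse.} Given any achievable code with $J=f(X^n)$, I would single-letterize using the auxiliaries
\[
W_{1,i} \equiv (J, Y^{i-1}), \qquad W_{2,i} \equiv (J, Y^{i-1}, Y_{i+1}^n),
\]
possibly augmented by $X_{i+1}^n$ as required by the standard Heegard-Berger rate manipulation. The Markov chain $Y_i - X_i - (W_{1,i},W_{2,i})$ holds because $(X^n,Y^n)$ is i.i.d.\ and $J$ depends only on $X^n$. The decoder reconstructions single-letterize as $\hat{X}_{1,i} = \phi_1(W_{1,i})$ and $\hat{X}_{2,i} = \phi_2(W_{2,i},Y_i)$, yielding the distortion bounds. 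The rate inequality
\[
nR \geq H(J) \geq \sum_{i=1}^n \bigl[ I(X_i;W_{1,i}) + I(X_i;W_{2,i}\,|\,W_{1,i},Y_i) \bigr]
\]
is the classical Heegard-Berger chain-rule manipulation (invoking the Csisz\'ar sum identity where needed), while the equivocation bound follows directly from the entropy chain rule:
\[
nE \leq H(Y^n|J) = \sum_{i=1}^n H(Y_i|J,Y^{i-1}) = \sum_{i=1}^n H(Y_i|W_{1,i}).
\]
A uniform time-sharing variable $Q$ absorbed into the auxiliaries converts all four bounds into single-letter form under a common joint law, and a standard Carath\'eodory argument supplies the alphabet bounds stated in Definition~\ref{DefRDEHB}.

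\emph{Main obstacle.} The delicate step is the achievability: showing that the \emph{same} bin index $J_2$ that conveys the inner-layer information to decoder~2 leaks no additional rate about $Y^n$ to decoder~1. This is not automatic from the unconditional Lemma~\ref{LemmaWZ}; it demands a conditional version in which, after fixing a typical $w_1^n$, one argues that the random binning of $w_2^n$ produces a bin index asymptotically independent of $Y^n$ given $w_1^n$. The proof of Lemma~\ref{LemmaWZ} relies only on the uniform-within-bin property of random binning together with joint typicality, both of which survive the conditioning, so the extension goes through; once this conditional independence is in hand, the rate and equivocation bounds decouple cleanly across the two layers, and achievability meets the converse $\mathcal{R} \subseteq \mathcal{R}_U$.
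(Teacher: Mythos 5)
Your proposal is correct and follows essentially the same route as the paper: the standard two-layer Heegard--Berger code, the decomposition $H(Y^n|J)=H(Y^n|W_1^n)-I(Y^n;B|W_1^n)$ handled via a conditional form of Lemma~\ref{LemmaWZ} (exactly the path sketched in Remark~4 and carried out in the appendix), and a converse that reuses the Heegard--Berger single-letterization with $W_{1,i}=(J,Y^{i-1})$ for the entropy bound. Your uncertainty about the precise form of $W_{2,i}$ and your overloading of $J_2$ to denote the bin index are cosmetic, not substantive, differences.
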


\textit{Proof sketch}: \textit{Converse}: A lower bound on $R\left(
D_{1},D_{2},E\right)  $ is the same as that in \cite{HeegardBerger} and
involves the introduction of two auxiliary variables $W_{1,i}\equiv\left(
J,Y^{i-1}\right)  $ and $W_{2,i}\equiv\left(  X^{i-1}Y_{i+1}^{n}\right)  $.
Using this definition of $W_{1,i}$, one can expand the equivocation definition
in (\ref{Eq}) to show that $\Gamma(D_{1},D_{2})\leq H(Y|W_{1})$.

\textit{Achievable scheme}: The achievable scheme begins with a
rate-distortion code for decoder 1 by mapping an observed $x^{n}$ sequence to
one of a set of $2^{nI(X;W_{1})}$ $w_{1}^{n}$ sequences, denoted $w_{1}%
^{n}\left(  j_{1}\right)  $, subject to typicality requirements. For this
choice of $w_{1}^{n}\left(  j_{1}\right)  $, a second code for decoder 2
results from choosing a conditionally typical sequence out of a set of
$2^{nI(X;W_{2}|W_{1})}$ $w_{2}^{n}$ sequences, denoted by $w_{2}^{n}\left(
j_{2}|j_{1}\right)  $, and binning the resulting sequence into one of
$2^{n\left(  I(X_{1};W_{2}|W_{1})-I(Y;W_{2}|W_{1})\right)  }$ bins, denoted by
$b\left(  j_{2}\right)  $, chosen uniformly. The pair $\left(  j_{1},b\left(
j_{2}\right)  \right)  $ is revealed to the decoders. We show in the appendix
that this scheme achieves an equivocation of $H\left(  Y|W_{1}\right)  $
asymptotically; the crux of our proof relies on the fact that the binning
index $B\left(  J_{2}\right)  $ is conditionally independent of $\left(
XW_{1}\right)  $ conditioned on $W_{2}$, i.e., the random variables are
related via the Markov chain relationship $Y-\left(  XW_{1}\right)
-W_{2}-B\left(  J_{2}\right)  $.

\begin{remark}
An intuitive way to interpret the equivocation arises from the following
decomposition:%
\begin{subequations}
\label{HAchInt}%
\begin{align}
\frac{1}{n}H(Y^{n}|J_{1},B\left(  J_{1},J_{2}\right)  ) &  =\frac{1}{n}%
H(Y^{n}|J_{1})\\
&  \text{ \ \ \ }-\frac{1}{n}I\left(  Y^{n};B\left(  J_{2}\right)
|J_{1}\right)  \\
&  =\frac{1}{n}H(Y^{n}|W_{1}^{n}\left(  J_{1}\right)  )\label{RemEq}\\
&  \text{ \ \ \ }-\frac{1}{n}I\left(  Y^{n};B\left(  J_{2}\right)  |W_{1}%
^{n}\left(  J_{1}\right)  \right)  .\nonumber
\end{align}
The first term in (\ref{RemEq}) is approximately equal to $H\left(
Y|W_{1}\right)  $ while the second term, which in the limit goes to $0$,
follows from a conditional version of Lemma \ref{LemmaWZ} and the fact that
$Y-X-\left(  W_{1}W_{2}\right)  -B\left(  J_{2}\right)  $ forms a Markov chain.
\end{subequations}
\end{remark}

\subsection{Informed Encoder with Side Information\ Privacy}

We now consider the source network in which the encoder has access to the side
information $Y^{n}$ and derive the set of all feasible
rate-distortion-equivocation tuples. The resulting problem may be viewed as
the Kaspi problem with an additional privacy constraint about $Y^{n}$ at
decoder $1.$ Our results below demonstrate that the Kaspi coding scheme
achieves the set of all rate-distortion-equivocation tuples. However, for a
given $\left(  D_{1},D_{2},E\right)  $ pair, the minimal rate $R(D_{1}%
,D_{2},E)$ will in general be different from the $R(D_{1},D_{2})$ for the
original Kaspi problem.

Our proof includes a two-step achievable scheme involving binning for the
conditional rate-distortion function for which we show that the bin index is
independent of the side information $Y^{n}$. Our converse is a minor
modification of the converse in \cite{KaspiSI} and involves two auxiliary
random variables. We briefly sketch the proof here; the details are relegated
to the appendix.

\subsubsection{Rate-Distortion-Equivocation $\left(  R,D_{1},D_{2},E\right)  $
Tuples}

\begin{definition}
\label{DefRDEKaspi}Let $\Gamma_{I}(D_{1},D_{2})$ and $R_{I}\left(  D_{1}%
,D_{2},E\right)  $ be two functions defined as
\begin{align}
\Gamma_{I}\left(  D_{1},D_{2}\right)   &  \equiv\max_{\mathcal{P}_{I}%
(D_{1},D_{2},E)}H(Y|W_{1})\text{, and}\label{GammaRDE_I}\\
R_{I}\left(  D_{1},D_{2},E\right)   &  \equiv\min_{\mathcal{P}_{I}(D_{1}%
,D_{2},E)}I(XY;W_{1})+I(X;W_{2}|W_{1}Y)
\end{align}
such that
\begin{equation}
\mathcal{R}_{I}\equiv\left\{  \left(  R,D_{1},D_{2},E\right)  :D_{1}%
\geq0,D_{2}\geq0,0\leq E\leq\Gamma_{I}\left(  D_{1},D_{2}\right)  ,R\geq
R_{I}\left(  D_{1},D_{2},E\right)  \right\}
\end{equation}
where $\mathcal{P}_{I}\left(  D_{1},D_{2},E\right)  $ is the set of all
$p\left(  x,y)p(w_{1},w_{2}|x,y\right)  $ that satisfy (\ref{Dist}) and
(\ref{Eq}) and $\left\vert \mathcal{W}_{1}\right\vert =\left\vert
\mathcal{X}\right\vert +2,$ $\left\vert \mathcal{W}_{2}\right\vert =\left(
\left\vert \mathcal{X}\right\vert +1\right)  ^{2}$.
\end{definition}

\begin{remark}
The cardinality bounds on $\mathcal{W}_{1}$ and $\mathcal{W}_{2}$ can be
obtained analogously to the arguments in \cite[p. 730]{HeegardBerger}.
\end{remark}

\begin{lemma}
$R_{I}\left(  D_{1},D_{2},E\right)  $ is a convex function of $\left(
D_{1},D_{2},E\right)  $.
\end{lemma}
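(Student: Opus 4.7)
\textit{Proof plan.} The natural approach is a standard time-sharing argument: show that any convex combination of two achievable $(D_1,D_2,E)$ tuples can be realized by a single auxiliary construction whose rate is exactly the convex combination of the two rates. First I would fix two operating points $(D_1^{(1)},D_2^{(1)},E^{(1)})$ and $(D_1^{(2)},D_2^{(2)},E^{(2)})$ together with distributions $p_i(w_1,w_2\mid x,y)$ (with associated reconstruction functions) that achieve the minima $R_I^{(1)}$ and $R_I^{(2)}$ in Definition~\ref{DefRDEKaspi}. For a fixed $\lambda\in[0,1]$, I would then introduce a time-sharing random variable $Q$ with $\Pr(Q=1)=\lambda$, $\Pr(Q=2)=1-\lambda$, independent of $(X,Y)$, and on the event $\{Q=i\}$ draw $(W_1,W_2)$ according to $p_i(\,\cdot\mid x,y)$.

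Next I would define new auxiliaries $W_1'=(W_1,Q)$ and $W_2'=(W_2,Q)$, so that the induced joint distribution still factors as $p(x,y)\,p(w_1',w_2'\mid x,y)$ and therefore lies in the admissible class $\mathcal{P}_I$. The three constraints are verified in turn. For the distortions, choosing reconstructions that use $Q$ to switch between the two schemes gives $\mathbb{E}[d_k(X,\hat{X}_k)]=\lambda D_k^{(1)}+(1-\lambda)D_k^{(2)}$ for $k=1,2$. For the equivocation, since $Q$ is independent of $Y$,
\begin{equation}
H(Y\mid W_1')=H(Y\mid W_1,Q)=\lambda H(Y\mid W_1^{(1)})+(1-\lambda)H(Y\mid W_1^{(2)})\geq \lambda E^{(1)}+(1-\lambda)E^{(2)}.
\end{equation}
For the rate, using independence of $Q$ from $(X,Y)$,
\begin{align}
I(XY;W_1')+I(X;W_2'\mid W_1'Y) &= I(XY;Q)+I(XY;W_1\mid Q)+I(X;W_2\mid W_1,Q,Y) \nonumber\\
&= \lambda R_I^{(1)}+(1-\lambda)R_I^{(2)},
\end{align}
where the first term vanishes and the remaining two decompose cleanly according to $Q$.

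Putting these together shows that the tuple $(\lambda D_1^{(1)}+(1-\lambda)D_1^{(2)},\;\lambda D_2^{(1)}+(1-\lambda)D_2^{(2)},\;\lambda E^{(1)}+(1-\lambda)E^{(2)})$ is attained by a distribution in $\mathcal{P}_I$ whose objective equals $\lambda R_I^{(1)}+(1-\lambda)R_I^{(2)}$. Taking the minimum then yields
\begin{equation}
R_I\bigl(\lambda D_1^{(1)}+(1-\lambda)D_1^{(2)},\lambda D_2^{(1)}+(1-\lambda)D_2^{(2)},\lambda E^{(1)}+(1-\lambda)E^{(2)}\bigr)\leq \lambda R_I^{(1)}+(1-\lambda)R_I^{(2)},
\end{equation}
which is the desired convexity.

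The only delicate point—and what I expect to be the main obstacle in writing it cleanly—is handling the equivocation constraint, which enters as an inequality $H(Y\mid W_1)\geq E$ rather than an equality, and verifying that the cardinality bounds on $\mathcal{W}_1,\mathcal{W}_2$ stated in Definition~\ref{DefRDEKaspi} are not binding for the time-shared auxiliaries. The first issue is resolved by the concavity of $H(Y\mid W_1)$ in the underlying distribution (cf.\ Lemma~\ref{Lemma_TD}), which guarantees that the time-shared construction meets the equivocation requirement with slack; the second is handled by reapplying the standard support-lemma argument after time sharing to reduce back to the stated alphabet sizes without increasing the objective.
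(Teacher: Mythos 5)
Your time-sharing argument is correct and is the standard way to establish convexity of a single-letter rate function; the paper itself states this lemma without proof, treating it as a routine consequence of time sharing, so there is no internal argument to compare against. Your decomposition of the rate, $I(XY;W_1') = I(XY;Q) + I(XY;W_1\mid Q)$ with $I(XY;Q)=0$ by independence, and $I(X;W_2'\mid W_1',Y) = I(X;W_2\mid W_1,Q,Y)$ because $Q$ already sits in the conditioning, cleanly yields $\lambda R_I^{(1)} + (1-\lambda)R_I^{(2)}$; and $H(Y\mid W_1,Q)$ averages to $\lambda H(Y\mid W_1^{(1)}) + (1-\lambda)H(Y\mid W_1^{(2)})$ unconditionally (no independence even needed for that step), so the equivocation constraint is met with the right slack. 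You are also right to flag the cardinality bounds $\lvert\mathcal{W}_1\rvert=\lvert\mathcal{X}\rvert+2$, $\lvert\mathcal{W}_2\rvert=(\lvert\mathcal{X}\rvert+1)^2$ as the one point requiring care, since $W_1'=(W_1,Q)$ exceeds them; invoking the support lemma (Fenchel--Eggleston--Carath\'eodory) to return to the bounded alphabet without increasing the objective or violating the constraints is exactly the standard resolution, and it is the same argument the paper cites for these bounds in the first place. The proof is complete as written.
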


\begin{theorem}
\label{Th_KasRDE}For a two-source $\left(  X,Y\right)  $ where $X^{n}$ is
available at the source, and $Y^{n}$ is available at the source and at decoder
2 but not at decoder 1, we have%
\begin{equation}%
\begin{array}
[c]{ccc}%
\mathcal{R}=\mathcal{R}_{I}, & \Gamma\left(  D_{1},D_{2}\right)  =\Gamma
_{I}\left(  D_{1},D_{2}\right)  \text{,} & \text{and }R\left(  D_{1}%
,D_{2},E\right)  =R_{I}\left(  D_{1},D_{2},E\right)  .
\end{array}
\end{equation}

\end{theorem}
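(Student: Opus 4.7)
The approach parallels Theorem~\ref{ThHBRDE} but must exploit the encoder's access to $Y^{n}$ while circumventing the issue raised in the introduction: in the vanilla Kaspi scheme, the second-layer codeword is chosen after observing $Y^{n}$, so its descriptor is not obviously independent of $Y^{n}$ given the first-layer codeword. The plan is to replace the second stage with a two-step \emph{encode-then-bin} construction so that a conditional version of Lemma~\ref{LemmaWZ} can be invoked. The converse follows the Kaspi converse for the rate bound and a standard single-letterization of $H(Y^{n}|J)$ for the equivocation bound, using the same auxiliary $W_{1}$.

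\textbf{Achievability.} For each joint-typical pair $(x^{n},y^{n})$, the encoder first picks a $w_{1}^{n}(j_{1})$ jointly typical with $(x^{n},y^{n})$ from a codebook of size $2^{n(I(XY;W_{1})+\epsilon)}$; since the encoder knows $Y^{n}$, the rate cost is $I(XY;W_{1})$. In the second stage, given $(x^{n},y^{n},w_{1}^{n}(j_{1}))$, the encoder first selects a $w_{2}^{n}(j_{2})$ jointly typical with this triple from a codebook of size $2^{n(I(X,Y;W_{2}|W_{1})+\epsilon)}$, and only then bins $w_{2}^{n}(j_{2})$ uniformly at random into $2^{n(I(X;W_{2}|W_{1}Y)+\epsilon)}$ bins. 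The transmitted index is $J=(J_{1},B(J_{2}))$, giving total rate $I(XY;W_{1})+I(X;W_{2}|W_{1}Y)$. Decoder~1 reconstructs via $w_{1}^{n}(j_{1})$; decoder~2 uses $Y^{n}$ together with $(j_{1},b(j_{2}))$ to recover $w_{2}^{n}(j_{2})$ by the Wyner-Ziv decoding rule. Distortion constraints are met by the usual typicality arguments.

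\textbf{Equivocation.} Writing $J=(J_{1},B(J_{2}))$, the equivocation decomposes as
\begin{equation*}
\tfrac{1}{n}H(Y^{n}|J)=\tfrac{1}{n}H(Y^{n}|J_{1})-\tfrac{1}{n}I(Y^{n};B(J_{2})|J_{1}).
\end{equation*}
The first term converges to $H(Y|W_{1})$ by standard typicality arguments applied to the $W_{1}$ codebook. For the second term I would invoke a conditional form of Lemma~\ref{LemmaWZ}: because the bin assignment in the second stage is performed \emph{after} selecting $w_{2}^{n}$ and uniformly at random, conditioned on $J_{1}$ (and hence on $W_{1}^{n}$), the Markov chain $Y^{n}-(X^{n},W_{1}^{n})-W_{2}^{n}-B(J_{2})$ holds by construction, and the Wyner-Ziv-style argument of Lemma~\ref{LemmaWZ} gives $I(Y^{n};B(J_{2})|J_{1})/n\to 0$. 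This is precisely the step that the original Kaspi scheme does not accommodate and is the whole reason for introducing the explicit binning layer.

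\textbf{Converse and main obstacle.} Set $W_{1,i}\equiv(J,Y^{i-1})$ and $W_{2,i}\equiv(X^{i-1},Y_{i+1}^{n})$. Standard single-letterization as in \cite{KaspiSI} yields $nR\ge\sum_{i}\bigl[I(X_{i}Y_{i};W_{1,i})+I(X_{i};W_{2,i}|W_{1,i}Y_{i})\bigr]$, after which a time-sharing variable produces the bound $R\ge I(XY;W_{1})+I(X;W_{2}|W_{1}Y)$ over the claimed distribution class $p(x,y)p(w_{1},w_{2}|x,y)$. For the equivocation, $nE\le H(Y^{n}|J)\le\sum_{i}H(Y_{i}|J,Y^{i-1})=\sum_{i}H(Y_{i}|W_{1,i})$ gives $E\le H(Y|W_{1})$. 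The main obstacle is to ensure that the rate and equivocation bounds are simultaneously tight under the \emph{same} $p(w_{1},w_{2}|x,y)$: one must verify that the $W_{1}$ appearing in the equivocation bound can be identified with the $W_{1}$ in the rate bound, so that the intersection constraint $E\le\Gamma_{I}(D_{1},D_{2})$ in the definition of $\mathcal{R}_{I}$ is exactly what the converse delivers. Combined with the cardinality bounds cited in the remark after Definition~\ref{DefRDEKaspi}, this closes the gap between $\mathcal{R}$ and $\mathcal{R}_{I}$.
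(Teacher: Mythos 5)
Your overall plan matches the paper's proof exactly: the converse single-letterizes with $W_{1,i}\equiv(J,Y^{i-1})$ and $W_{2,i}\equiv(X^{i-1},Y_{i+1}^{n})$, and the achievability replaces Kaspi's second layer with an encode-then-bin construction so that a conditional version of Lemma~\ref{LemmaWZ} can be applied. (The paper closes the single-letterization using convexity of $R_{I}$ and concavity of the equivocation rather than a time-sharing variable, but that is an equivalent routine step; and whether the $W_2$-codebook is generated conditionally on $w_1^n$ or unconditionally with the binning rate adjusted accordingly is a cosmetic choice.) However, there is a genuine error in the key step of your equivocation argument. You assert that ``the Markov chain $Y^{n}-(X^{n},W_{1}^{n})-W_{2}^{n}-B(J_{2})$ holds by construction.'' The first link is false here: because the encoder is \emph{informed}, $w_{2}^{n}$ is chosen to be jointly typical with the full triple $(x^{n},y^{n},w_{1}^{n})$, so conditioned on $(X^{n},W_{1}^{n})$ alone, $W_{2}^{n}$ is still correlated with $Y^{n}$. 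You appear to have carried over the Markov chain from the uninformed case (where the encoder never sees $Y^{n}$, so $W_{2}^{n}$ is selected from $(X^{n},W_{1}^{n})$ and independent randomness and that chain does hold).

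The correct and weaker statement, which is exactly what uniform random binning of a \emph{chosen} $w_{2}^{n}$ furnishes, and all that the argument needs, is the chain $(X^{n},Y^{n},W_{1}^{n})-W_{2}^{n}-B(J_{2})$. This is precisely the distinction the paper emphasizes in the remark following the theorem (``in contrast to the uninformed case, the distribution here is such that $(XY)-(W_{1}W_{2})-B(J_{2})$ forms a Markov chain''). With the corrected chain, the conditional-Lemma-\ref{LemmaWZ} computation goes through: $\Pr(w_{2}^{n}\,|\,w_{1}^{n},b,\mu=1)=\Pr(w_{2}^{n}\,|\,w_{1}^{n},\mu=1)$ up to a $2^{n\epsilon'}$ factor, and summing over $w_{2}^{n}$ gives $\Pr(y^{n}\,|\,w_{1}^{n},b,\mu=1)\le 2^{-n(H(Y|W_{1})-\epsilon)}$, yielding $H(Y^{n}|J)/n\to H(Y|W_{1})$. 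The misidentified Markov chain is not a harmless typo here because it is the very feature that distinguishes the informed construction and motivates the modification of the Kaspi scheme in the first place; you should state and use the correct conditional independence.
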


\textit{Proof sketch}: \textit{Converse}: A lower bound on $R\left(
D_{1},D_{2},E\right)  $ can be obtained analogously to the bounds in
\cite{KaspiSI} with the introduction of two auxiliary variables $W_{1,i}%
\equiv\left(  J,Y^{i-1}\right)  $ and $W_{2,i}\equiv\left(  X^{i-1}Y_{i+1}%
^{n}\right)  $. Using this definition of $W_{1,i}$, one can expand the
equivocation definition in (\ref{Eq}) to obtain $\Gamma(D_{1},D_{2})\leq
H(Y|W_{1})$.

\textit{Achievable scheme}: The achievable scheme begins with a
rate-distortion code for decoder 1 by mapping an observed $\left(  x^{n}%
,y^{n}\right)  $ sequence to one of a set of $2^{nI(XY;W_{1})}$ $w_{1}^{n}$
sequences, denoted by $w_{1}^{n}\left(  j_{1}\right)  $, subject to typicality
requirements. A second rate-distortion code for decoder 2 results from mapping
$\left(  x^{n},y^{n},w_{1}^{n}\right)  $ to one of a set of $2^{nI(XYW_{1}%
;W_{2})}$ $w_{2}^{n}$ sequences, denoted by $w_{2}^{n}\left(  j_{2}\right)  $,
and binning the resulting sequence into one of $2^{n\left(  I(XYW_{1}%
;W_{2})-I(YW_{1};W_{2})\right)  }$ bins, denoted by $b\left(  j_{2}\right)  $,
chosen uniformly. The pair $\left(  j_{1},b\left(  j_{2}\right)  \right)  $ is
revealed to the decoders. In the appendix it is shown that this scheme
achieves an equivocation of $H\left(  Y|W_{1}\right)  $; the crux of the proof
relies on the fact that the binning index $B\left(  J_{2}\right)  $ is
conditionally independent of $\left(  XYW_{1}\right)  $ conditioned on $W_{2}$.

\begin{remark}
An intuitive way to interpret the equivocation arises from the same
decomposition as in (\ref{HAchInt}) where the first term in (\ref{RemEq}) is
approximately equal to $H\left(  Y|W_{1}\right)  $ while the second term,
which in the limit goes to $0$, follows from a conditional version of Lemma
\ref{LemmaWZ}. Note that, in contrast to the uninformed case, the distribution
here is such that $\left(  XY\right)  -\left(  W_{1}W_{2}\right)  -B\left(
J_{2}\right)  $ forms a Markov chain.
\end{remark}

\section{\label{sec:compare}Results for a Binary Source with Erased Side
Information}

We consider the following pair of correlated sources. $X$ is binary and
uniform, and
\[
Y=\left\{
\begin{array}
[c]{ll}%
X, & \hbox{w.p. }(1-p)\\
E, & \hbox{w.p. }p,
\end{array}
\right.
\]
and we consider the Hamming distortion metric, i.e., $d(x,\hat{x})=x\oplus
\hat{x}$ for both decoders and for both the informed and uninformed cases.

\subsection{Uninformed Case}

We are interested in the rate-distortion-equivocation tradeoff, given as,
\begin{align}
R &  \geq I(X;W_{1})+I(X;W_{2}|Y,W_{1}),\text{ and}\\
E &  \leq H(Y|W_{1})
\end{align}
where the rate and equivocation computation is over all random variables
$\left(  W_{1},W_{2}\right)  $ that satisfy the Markov chain relationship
$\left(  W_{1},W_{2}\right)  -X-Y$ and for which there exist functions
$f_{1}\left(  \cdot\right)  $ and $f_{1}\left(  \cdot,\cdot,\cdot\right)  $
satisfying%
\begin{align}
E[d(X,f_{1}(W_{1}))] &  \leq D_{1}\text{, and}\\
E[d(X,f_{2}(W_{1},W_{2},Y))] &  \leq D_{2}.
\end{align}
Let $h\left(  a\right)  $ denote the binary entropy function defined for
$a\in\lbrack0,1].$ The $\left(  D_{1},D_{2}\right)  $ region for this case is
partitioned into four regimes as shown in Fig. \ref{partitionHeegardBerger}.%

\begin{figure}
[ptb]
\begin{center}
\includegraphics[
height=2.9343in,
width=2.8115in
]%
{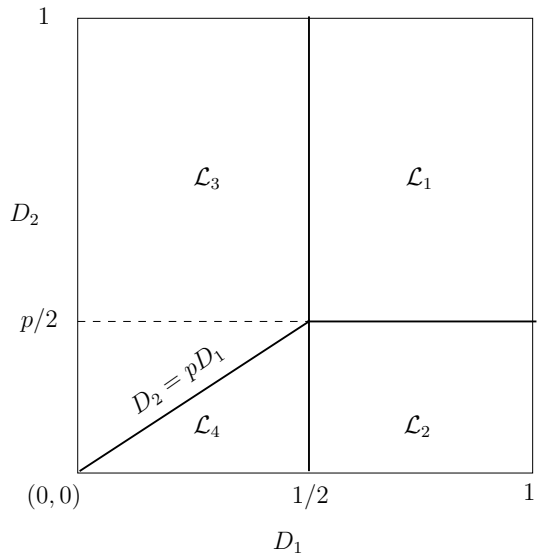}%
\caption{Partition of the $(D_{1},D_{2})$ region: uninformed encoder case.}%
\label{partitionHeegardBerger}%
\end{center}
\end{figure}

The rate-distortion-equivocation tradeoff is given as follows:
\[
R(D_{1},D_{2})=%
\begin{cases}
0; & \hbox{ if }(D_{1},D_{2})\in\mathcal{L}_{1},\\
p(1-h(D_{2}/p)); & \hbox{ if }(D_{1},D_{2})\in\mathcal{L}_{2},\\
1-h(D_{1}); & \hbox{ if }(D_{1},D_{2})\in\mathcal{L}_{3},\\
p(1-h(D_{2}/p))+(1-p)(1-h(D_{1})); & \hbox{ if }(D_{1},D_{2})\in
\mathcal{L}_{4}.
\end{cases}
\]
and
\[
\Gamma(D_{1},D_{2})=%
\begin{cases}
h(p)+(1-p)h(D_{1}); & \hbox{ if }D_{1}\leq1/2,\\
h(p)+(1-p); & \hbox{ otherwise}.
\end{cases}
\]

In Figure \ref{FigBSES}, we have plotted $R(D_{1},D_{2})$ and $\Gamma
(D_{1},D_{2})$ for the cases in which $D_{2}=p/2$ and $D_{2}=p/8$, and
$D_{1}\in\lbrack0,1/2]$.

\begin{remark}
This example shows that the equivocation does not depend on the distortion
achieved by the decoder 2 which has access to side-information $Y$, but rather
depends only on the distortion achieved by the uninformed decoder 1.
\end{remark}%

\begin{figure}
[ptb]
\begin{center}
\includegraphics[
trim=0.381334in 0.151667in 0.381916in 0.152541in,
height=2.6991in,
width=3.3503in
]%
{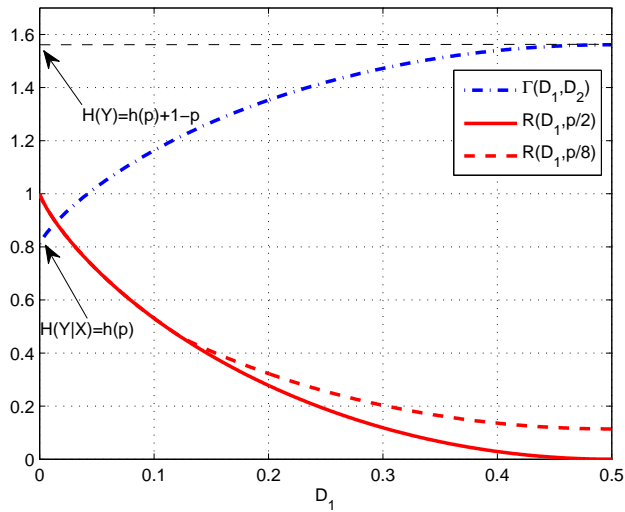}%
\caption{Illustration of the rate-equivocation tradeoff for $p=0.25$.}%
\label{FigBSES}%
\end{center}
\end{figure}

\subsubsection{Upper bound on $\Gamma(D_{1},D_{2})$}

For any $D_{1}\geq1/2$, we use the trivial upper bound
\begin{align}
\Gamma(D_{1},D_{2})  &  \leq H(Y|W_{1})\leq H(Y)\\
&  =h(p)+1-p.
\end{align}
For any $D_{1}\leq1/2$, we use the following:
\begin{subequations}
\begin{align}
\Gamma(D_{1},D_{2})  &  \leq H(Y|W_{1})\\
&  =H(Y,X|W_{1})-H(X|Y,W_{1})\\
&  =H(X|W_{1})+H(Y|X)-H(X|Y,W_{1})\\
&  =H(X|W_{1})+H(Y|X)-pH(X|W_{1})\label{EgTD4}\\
&  =H(Y|X)+(1-p)H(X|W_{1})\\
&  =H(Y|X)+(1-p)H(X|W_{1},\hat{X}_{1})\\
&  \leq H(Y|X)+(1-p)H(X|\hat{X}_{1})\\
&  \leq H(Y|X)+(1-p)H(X\oplus\hat{X}_{1})\\
&  =H(Y|X)+(1-p)h(P(X\neq\hat{X}_{1}))\\
&  \leq h(p)+(1-p)h(D_{1})
\end{align}
where (\ref{EgTD4}) follows from a direct verification that $H(X|Y,W_{1}%
)=pH\left(  X|W_{1}\right)  $ if $X$ is uniform and $Y$ is an erased version
of $X$ and $W_{1}-X-Y$ forms a Markov chain.

\subsubsection{Lower bound on $R(D_{1},D_{2})$}
\end{subequations}
\begin{itemize}
\item If $(D_{1},D_{2})\in\mathcal{L}_{1}$, we use the lower bound
$R(D_{1},D_{2})\geq0$.

\item If $(D_{1},D_{2})\in\mathcal{L}_{2}$, we use the lower bound
$R(D_{1},D_{2})\geq R_{WZ}^{(Y)}(D_{2})$ \cite{WeissmanVerdu}.

\item If $(D_{1},D_{2})\in\mathcal{L}_{3}$, we use the lower bound
$R(D_{1},D_{2})\geq1-h(D_{1})$.

\item If $(D_{1},D_{2})\in\mathcal{L}_{4}$, we show that
\begin{align}
R(D_{1},D_{2})\geq p(1-h(D_{2}/p))+(1-p)(1-h(D_{1})).
\end{align}

\end{itemize}

Consider an arbitrary $(W_{1},W_{2})$ such that $(W_{1},W_{2})\rightarrow
X\rightarrow Y$ is a Markov chain and there exist functions $f_{1}$ and
$f_{2}$:
\[
\hat{X}_{1}=f_{1}(W_{1}),\quad\text{and }\hat{X}_{2}=f_{2}(W_{1},W_{2},Y),
\]
such that
\[
\Pr(X\neq\hat{X}_{j})\leq D_{j},\quad j=1,2.
\]
Now consider the following sequence of equalities:
\begin{subequations}
\begin{align}
I(X;W_{1})+I(X;W_{2}|Y,W_{1}) &  =H(X)-H(X|W_{1})+H(X|Y,W_{1})-H(X|Y,W_{1}%
,W_{2})\nonumber\\
&  =H(X)-I(X;Y|W_{1})-H(X|Y,W_{1},W_{2})\nonumber\\
&  =H(X)-H(Y|W_{1})+H(Y|X,W_{1})-H(X|Y,W_{1},W_{2})\nonumber\\
&  =H(X)+H(Y|X)-H(Y|W_{1})-H(X|Y,W_{1},W_{2}).\label{exp1}%
\end{align}

Consider the following term appearing in (\ref{exp1}):
\end{subequations}
\begin{subequations}
\begin{align}
H(Y|W_{1}) &  =H(Y,X|W_{1})-H(X|Y,W_{1})\\
&  =H(Y|X)+H(X|W_{1})-H(X|Y,W_{1})\\
&  =H(Y|X)+(1-p)H(X|W_{1})\\
&  =H(Y|X)+(1-p)H(X|W_{1},\hat{X}_{1})\\
&  \leq H(Y|X)+(1-p)H(X|\hat{X}_{1})\\
&  \leq H(Y|X)+(1-p)H(X\oplus\hat{X}_{1})\\
&  \leq H(Y|X)+(1-p)h(D_{1}).\label{exp2}%
\end{align}
We also have
\end{subequations}
\begin{subequations}
\begin{align}
D_{2} &  \geq\Pr(X\neq\hat{X}_{2})\\
&  =\Pr(Y=E)\Pr(X\neq\hat{X}_{2}|Y=E)+\Pr(Y\neq E)\Pr(X\neq\hat{X}_{2}|Y\neq
E)\\
&  \geq\Pr(Y=E)\Pr(X\neq\hat{X}_{2}|Y=E)\\
&  =p\Pr(X\neq\hat{X}_{2}|Y=E)
\end{align}
which implies that
\end{subequations}
\begin{equation}
\Pr(X\neq\hat{X}_{2}|Y=E)\leq\frac{D_{2}}{p}\leq\frac{1}{2}.\label{exp3}%
\end{equation}
Now consider the following sequence of inequalities for the last term in
(\ref{exp1}):
\begin{subequations}
\begin{align}
H(X|Y,W_{1},W_{2}) &  =H(X|Y,W_{1},W_{2},\hat{X}_{2})\\
&  \leq H(X|Y,\hat{X}_{2})\\
&  =pH(X|Y=E,\hat{X}_{2})\\
&  \leq pH(X\oplus\hat{X}_{2}|Y=E)\\
&  =ph(P(X\neq\hat{X}_{2}|Y=E))\\
&  \leq ph(D_{2}/p)\label{exp4}%
\end{align}
where (\ref{exp4}) follows from (\ref{exp3}). Using (\ref{exp2}) and
(\ref{exp4}), we can lower bound (\ref{exp1}), to arrive at
\end{subequations}
\[
R(D_{1},D_{2})\geq p(1-h(D_{2}/p))+(1-p)(1-h(D_{1})).
\]

\subsubsection{Coding Scheme}

\begin{itemize}
\item If $(D_{1},D_{2})\in\mathcal{L}_{1}$, the $(R,\Gamma)$ tradeoff is trivial.

\item If $(D_{1},D_{2})\in\mathcal{L}_{2}$, we use the following coding scheme:

In this regime, we have $D_{1}\geq1/2$, hence the encoder sets $W_{1}=\phi$,
and sends only one description $W_{2}= X\oplus N$, where $N\sim
\hbox{Ber}(D_{2}/p)$ and $N$ is independent of $X$. It can be verified that
$I(X;W_{2}|Y)= p(1-h(D_{2}/p))$. Decoder $2$ estimates $X$ by $\hat{X}_{2}$ as
follows:
\[
\hat{X}_{2}=%
\begin{cases}
Y; & \hbox{ if }Y\neq E;\\
W_{2}; & \hbox{ if }Y=E.
\end{cases}
\]
Therefore the achievable distortion at decoder $2$ is $(1-p)0+p(D_{2}%
/p)=D_{2}$.

\item If $(D_{1},D_{2})\in\mathcal{L}_{3}$, we use the following coding scheme:

The encoder sets $W_{2}=\phi$, and sends only one description $W_{1}= X\oplus
N$, where $N\sim\hbox{Ber}(D_{1})$ and $N$ is independent of $X$. It can be
verified that $I(X;W_{1})= 1-h(D_{1})$. Decoder $1$ estimates $X$ as $\hat
{X}_{1}=W_{1}$ which leads to distortion of $D_{1}$. Decoder $2$ estimates $X$
by $\hat{X}_{2}$ as follows:
\[
\hat{X}_{2}=%
\begin{cases}
Y; & \hbox{ if }Y\neq E;\\
W_{1}; & \hbox{ if }Y=E.
\end{cases}
\]
Therefore the achievable distortion at decoder $2$ is $(1-p)0+p(D_{1})=pD_{1}%
$. Hence, as long as $D_{2}\geq pD_{1}$, the fidelity requirement of decoder
$2$ is satisfied.

\item If $(D_{1},D_{2})\in\mathcal{L}_{4}$, we use the following coding scheme:

We select $W_{2}=X\oplus N_{2}$, and $W_{1}=W_{2}\oplus N_{1}$, where
$N_{2}\sim Ber(D_{2}/p)$, and $N_{1}\sim Ber(\alpha)$, where $\alpha
=(D_{1}-D_{2}/p)/(1-2D_{2}/p)$, and the random variables $N_{1}$ and $N_{2}$
are independent of each other and are also independent of $X$. At the
uninformed decoder, the estimate is created as $\hat{X}_{1}=W_{1}$, so that
the desired distortion $D_{1}$ is achieved.

At the decoder with side-information $Y$, the estimate $\hat{X}_{2}$ is
created as follows:
\[
\hat{X}_{2}=%
\begin{cases}
Y; & \hbox{ if }Y\neq E;\\
W_{2}; & \hbox{ if }Y=E.
\end{cases}
\]
Therefore the achievable distortion at this decoder is $(1-p)0+p(D_{2}%
/p)=D_{2}$. It is straightforward to check that the rate required by this
scheme matches the stated lower bound on $R(D_{1},D_{2})$, and $\Gamma
(D_{1},D_{2})=H(Y|W_{1})=h(p)+(1-p)h(D_{1})$. This completes the proof of the
achievable part.
\end{itemize}

\subsection{\label{sec:Kaspi}Informed Encoder}

For this case, the rate-distortion-equivocation tradeoff is given as
\begin{align}
R &  \geq I(X,Y;W_{1})+I(X;W_{2}|W_{1},Y),\text{ and}\\
E &  \leq H(Y|W_{1})
\end{align}
where the joint distribution of $(W_{1},W_{2})$ with $(X,Y)$ can be arbitrary.

As in the previous section, we partition the space of admissible $(D_{1}%
,D_{2})$ distortion pairs. For simplicity, we denote these partitions as
follows:
\begin{align}
\mathcal{G}_{1} &  =\{(D_{1},D_{2}):D_{1}\geq1/2,D_{2}\geq p/2\},\\
\mathcal{G}_{2} &  =\{(D_{1},D_{2}):D_{1}\geq1/2,D_{2}\leq p/2\},\\
\mathcal{G}_{3} &  =\{(D_{1},D_{2}):D_{1}\geq D_{2}+(1-p)/2,D_{2}\leq p/2\},\\
\mathcal{G}_{4} &  =\{(D_{1},D_{2}):D_{1}\leq1/2,D_{2}\geq D_{1}\},\text{
and}\\
\mathcal{G}_{5} &  =\{(D_{1},D_{2}):D_{1}\leq D_{2}+(1-p)/2,D_{2}\leq D_{1}\}.
\end{align}
These partitions are illustrated in Figure \ref{partitionKaspi}.%

\begin{figure}
[ptb]
\begin{center}
\includegraphics[
height=2.9343in,
width=2.8115in
]%
{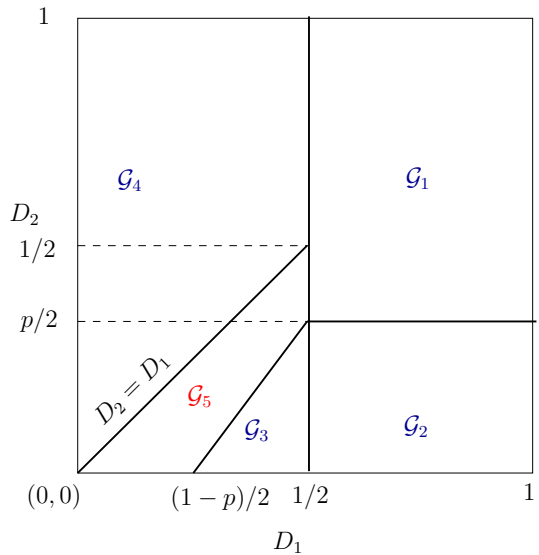}%
\caption{Partition of $(D_{1},D_{2})$ region: informed encoder case.}%
\label{partitionKaspi}%
\end{center}
\end{figure}

We provide a partial characterization the optimal $(R,E)$ tradeoff as a
function of $(D_{1},D_{2})$. In particular, we establish the tight
characterization of $(R,E)$ pairs for all values of $(D_{1},D_{2})$ with the
exception of when $(D_{1},D_{2})\in\mathcal{G}_{5}$. This characterization
reveals the benefit of the encoder side-information. It shows that in the
presence of encoder side-information, there can be several $(R,E)$ operating
points relative to the case in which the encoder does not have side-information.

(a) $(D_{1},D_{2})\in\mathcal{G}_{1}:$ In this case the $(R,\Gamma)$ region is
trivial since both the decoders can satisfy their distortion constraints which
also yields the maximum equivocation, i.e., we have
\begin{align}
R(D_{1},D_{2}) &  =0\text{, and}\\
\Gamma(D_{1},D_{2}) &  =h(p)+1-p
\end{align}

(b) $(D_{1},D_{2})\in\mathcal{G}_{2}:$ In this case, we use the proof as in
the uninformed case \ for the partition $\mathcal{L}_{2}$ to show that
\begin{align}
R(D_{1},D_{2}) &  =p(1-h(D_{2}/p)),\text{ and}\\
\Gamma(D_{1},D_{2}) &  =h(p)+1-p.
\end{align}

(c) $(D_{1},D_{2})\in\mathcal{G}_{3}:$ The $(R,\Gamma)$ tradeoff for this case
is given as follows:
\begin{align}
R(D_{1},D_{2}) &  =p(1-h(D_{2}/p)),\text{ and}\\
\Gamma(D_{1},D_{2}) &  =h(p)+1-p.
\end{align}
This case differs from the uninformed encoder case in the sense that for the
same rate, we can achieve the maximum equivocation and a non-trivial
distortion for decoder $1$. Since $R\geq R_{X|Y}(D_{2})=R_{WZ}^{Y}(D_{2})$,
and $\Gamma\leq H(Y)$, the converse proof is straightforward. The interesting
aspect of this regime is the coding scheme, which utilizes the side
information at the encoder in a non-trivial manner. To achieve this tradeoff,
we set $W_{2}=0$, and send only one description $W_{1}$ to both the decoders.
The conditional distribution $p(w_{1}|x,y)$ that is used to generate the
$W_{1}^{n}$ codewords is illustrated in Figure \ref{FigW1generation1} .%

\begin{figure}
[ptb]
\begin{center}
\includegraphics[
height=2.7916in,
width=2.7622in
]%
{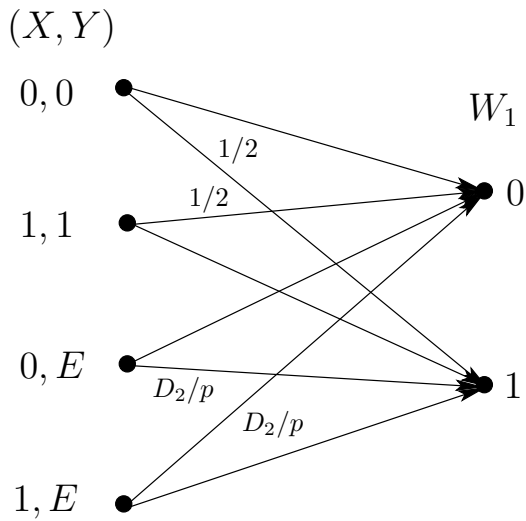}%
\caption{Illustration of $p(w_{1}|x,y)$ when $D_{1}\geq D_{2}+(1-p)/2$ and
$D_{2}\in\lbrack0,p/2]$.}%
\label{FigW1generation1}%
\end{center}
\end{figure}

Hence the rate for this scheme is given by
\begin{align}
R &  \geq I(X,Y;W_{1})\\
&  =H(W_{1})-H(W_{1}|X,Y)\\
&  =1-H(W_{1}|X,Y)\\
&  =1-(1-p)-ph(D_{2}/p)\\
&  =p(1-h(D_{2}/p)),
\end{align}
and the equivocation is given as
\begin{align}
\Gamma &  =H(Y|W_{1})\\
&  =H(Y)-I(Y;W_{1})\\
&  =H(Y)-H(W_{1})+H(W_{1}|Y)\\
&  =H(Y)-1+H(W_{1}|Y)\\
&  =H(Y)-1+(1-p)H(W_{1}|Y=X)+pH(W_{1}|Y\neq X)\\
&  =H(Y)-1+(1-p)+p\\
&  =H(Y).
\end{align}
Decoder $2$ forms its estimate as follows:
\[
\hat{X}_{2}=%
\begin{cases}
Y & \hbox{ if }Y\neq E;\\
W_{1} & \hbox{ if }Y=E,
\end{cases}
\]
which yields a distortion of $D_{2}$ at decoder $2$. Decoder $1$ forms its
estimate as
\[
\hat{X}_{1}=W_{1}%
\]
which yields
\[
\mathbb{P}(\hat{X}_{1}\neq X)=D_{2}+\frac{(1-p)}{2}.
\]
Therefore, as long as
\[
D_{1}\geq D_{2}+\frac{(1-p)}{2},
\]
this scheme achieves the optimal $(R,\Gamma)$ tradeoff.

We now informally describe the intuition behind this coding scheme: since the
encoder has access to side-information $Y$, it uses the fact that whenever
$Y=X$, no additional rate is required to satisfy the requirement of decoder
$2$, i.e., for $(1-p)$-fraction of time it is guaranteed to exactly recover
$X$. However, this yields a distortion of $(1-p)/2$ at decoder $1$ (since
decoder $1$ does not have access to $Y$). In the remaining $p$-fraction of
time, the encoder describes $X$ with a distortion $D_{2}/p$, which contributes
to a distortion of $D_{2}$ at both the decoders. To summarize, the net
distortion at decoder $2$ is $D_{2}$, whereas the distortion at decoder $1$ is
lowered from $1/2$ to $(1-p)/2+D_{2}$. Furthermore, by construction, $W_{1}$
is independent of $Y$, i.e., $H(Y|W_{1})=H(Y)$, which results in the maximal
equivocation at decoder 1.

(d) $(D_{1},D_{2})\in\mathcal{G}_{4}:$ For this case, the $(R,E)$ tradeoff is
given as the set of $(R,E)$ pairs
\begin{align}
R &  \geq1-(1-p)h\left(  \frac{D_{1}-p\alpha}{1-p}\right)  -ph(\alpha),\text{
and}\\
E &  \leq h(p)+(1-p)h\left(  \frac{D_{1}-p\alpha}{1-p}\right)  ,
\end{align}
where the parameter $\alpha$ belongs to the range $\alpha\in\lbrack0,D_{1}/p]$.

We now describe the coding scheme that achieves this region: we set
$W_{2}=\phi$, and send one description $W_{1}$ at a rate $I(X,Y;W_{1})$. The
conditional distribution $p(w_{1}|x,y)$ that is used to generate the
$W_{1}^{n}$ codewords is illustrated in Figure \ref{FigW1generation2}. The
parameters $(\alpha,\beta)$ that describe this distribution are chosen such
that
\begin{align}
D_{1} &  \geq\mathbb{P}(X\neq W_{1})\\
&  \geq(1-p)\beta+p\alpha,
\end{align}
so that $\beta\leq(D_{1}-p\alpha)/(1-p)$. At decoder $2$, the estimate
$\hat{X}_{2}$ is created as
\[
\hat{X}_{2}=%
\begin{cases}
Y; & \hbox{ if }Y\neq E;\\
W_{1}; & \hbox{ if }Y=E,
\end{cases}
\]
which yields a distortion of $p\alpha$. Since $\alpha\in\lbrack0,D_{1}/p]$,
the worst case distortion for decoder $2$ for a fixed $D_{1}$ is
$p(D_{1}/p)=D_{1}$. Hence, as long as $D_{2}\geq D_{1}$, we can satisfy the
fidelity requirements at both decoders. By direct calculations, it can be
shown that the resulting $(R,E)$ tradeoff is as stated above.

Compared to all the previous cases, the proof of optimality of the above
coding scheme is non-trivial and is relegated to the appendix.

We remark here that in this regime, the tradeoff between rate and privacy can
be observed in a precise manner. First, note that the choice $\alpha=D_{1}$
yields the $(R,E)$ operating point as in the uninformed encoder case. Next,
when $\alpha$ decreases from $D_{1}$ to $0$, the equivocation increases,
albeit at the cost of a higher rate. This phenomenon does not occur in the
case in which the encoder does not have side information.

Finally, when $\alpha$ is in the range $(D_{1},D_{1}/p]$, we obtain a lower
equivocation by increasing the rate. This phenomenon appears counterintuitive
and can be explained as follows: this range of $\alpha$ corresponds to a
coding scheme in which we give more weight to the side-information $Y$ when
describing $X$ to decoder $1$. Such a coding scheme can be regarded as the
solution to the problem in which the encoder is interested in revealing $Y$ to
decoder $1$, while simultaneously satisfying the fidelity requirement for $X$
at decoder $1$. While it is a feasible solution to the problem, it may not be
a desirable coding scheme when the privacy of $Y$ at decoder is of primary
concern, and thus, there exists a set of rate-equivocation operating points
that one can choose from. In Figure \ref{FigTradeoff2}, we show the $(R,E)$
achievable tradeoff when $p=0.4$ and $D_{1}=0.2$.%

\begin{figure}
[ptb]
\begin{center}
\includegraphics[
height=2.7899in,
width=3.1574in
]%
{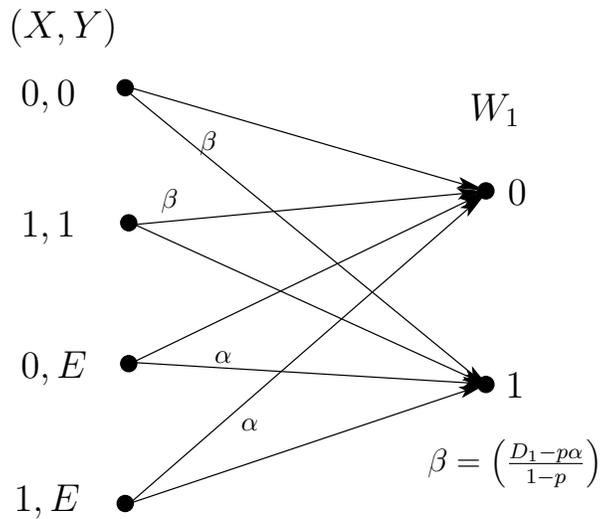}%
\caption{Illustration of $p(w_{1}|x,y)$ when $D_{1}\leq1/2,D_{2}\geq D_{1}$.}%
\label{FigW1generation2}%
\end{center}
\end{figure}
%

\begin{figure}
[ptb]
\begin{center}
\includegraphics[
height=2.9334in,
width=4.1978in
]%
{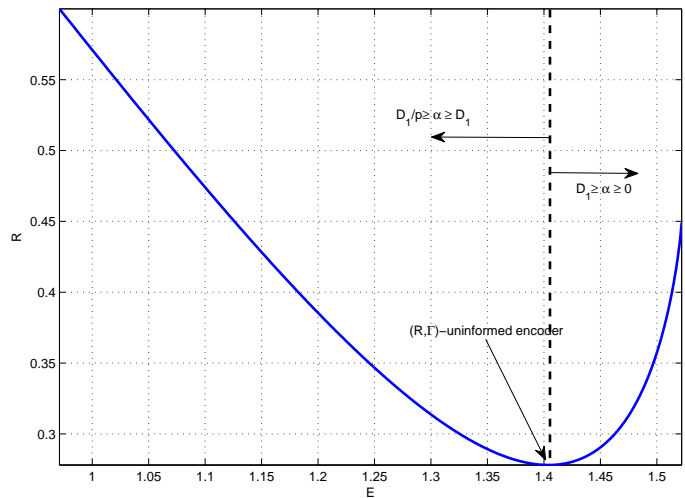}%
\caption{Illustration of the rate-equivocation tradeoff for $p=0.4,D_{1}=0.2$
with an informed encoder.}%
\label{FigTradeoff2}%
\end{center}
\end{figure}

(d) $(D_{1},D_{2})\in\mathcal{G}_{5}:$ For this case, the following $(R,E)$
pairs are achievable:
\begin{align}
R &  \geq1-(1-p)h\left(  \frac{D_{1}-p\alpha}{1-p}\right)  -ph(\alpha),\text{
and}\\
E &  \leq h(p)+(1-p)h\left(  \frac{D_{1}-p\alpha}{1-p}\right)  ,
\end{align}
where $\alpha$ is such that $\alpha\in\lbrack0,D_{2}/p]$. The coding scheme
that achieves this tradeoff is similar to the one used when $(D_{1},D_{2}%
)\in\mathcal{G}_{4}$, with the exception that the range of $\alpha$ is
different. The question of optimality of tradeoff for this regime is still unresolved.

\section{\label{sec:conclusion}Concluding Remarks}

We have determined the rate-distortion-equivocation region for a source coding
problem with two decoders, in which only one of the decoders has correlated
side information and it is desired to keep this side information private from
the uninformed decoder. We have studied two cases of this problem depending on
the availability of side information at the encoder. We have proved that the
Heegard-Berger and the Kaspi coding schemes are optimal even with an
additional privacy constraint for the uninformed and the informed encoder
cases, respectively. We have illustrated our results for a binary symmetric
source with erasure side information and Hamming distortion which clearly
highlight the difference between the informed and uninformed cases and the
advantages of encoder side information for enhancing privacy. Future work
includes generalization to multiple decoders as well as to continuously
distributed sources.%

\appendix
{}

\subsection{\label{App1}Proof of Theorem \ref{ThHBRDE}}

\textit{Converse}: The lower bound on $R\left(  D_{1},D_{2},E\right)  $ follow
directly from the converse for the Heegard-Berger problem and is omitted here
in the interest of space. We now upper bound the maximal achievable
equivocation as
\begin{subequations}
\label{HBHCon}%
\begin{align}
\frac{1}{n}H\left(  Y^{n}|J\right)   &  =%
{\textstyle\sum\limits_{i=1}^{n}}
\frac{1}{n}H\left(  Y_{i}|Y^{i-1}J\right)  \\
&  =%
{\textstyle\sum\limits_{i=1}^{n}}
\frac{1}{n}H\left(  Y_{i}|W_{i}\right)  \label{HBConTD1}\\
&  \leq\Gamma_{U}\left(  D_{1},D_{2}\right)  \label{HBConTD2}%
\end{align}
where (\ref{HBConTD1}) follows from defining $W_{1,i}\equiv\left(
J,Y^{i-1}\right)  $ (see \cite[sec. IV]{HeegardBerger}) and (\ref{HBConTD2})
follows from the definition of $\Gamma_{U}\left(  D_{1},D_{2}\right)  $ in
(\ref{Gamma_RDE}) and its concavity property from Lemma \ref{Lemma_TD}.

\textit{Achievability}: We briefly summarize the Heegard-Berger coding scheme
\cite{HeegardBerger}. Fix $p\left(  w_{1},w_{2}|x\right)  $. First generate
$M_{1}=2^{n\left(  I(W_{1};X)+\epsilon\right)  }$, $W_{1}^{n}\left(
j_{1}\right)  $ sequences, $j_{1}=1,2,\ldots,M_{1}$, independently and
identically distributed (i.i.d.) according to $p\left(  w_{1}\right)  $. For
every $W_{1}^{n}\left(  j_{1}\right)  $ sequence, generate $M_{2}=2^{n\left(
I(W_{2};X|W_{1})+\epsilon\right)  }$ $W_{2}^{n}\left(  j_{2}|j_{1}\right)  $
sequences i.i.d. according to $p\left(  w_{2}|w_{1}\left(  j_{1}\right)
\right)  $. Bin the resulting $W_{2}^{n}$ sequences into $S$ bins (analogously
to the Wyner-Ziv binning), chosen at random where $S=2^{n\left(
I(X;W_{2}|W_{1})-I(Y;W_{2}|W_{1})+\epsilon\right)  }$, and index these bins as
$b\left(  j_{2}\right)  $. Upon observing a source sequence $x^{n},$ the
encoder searches for a $W_{1}^{n}\left(  j_{1}\right)  $ sequence such that
$\left(  x^{n},w_{1}^{n}\left(  j_{1}\right)  \right)  \in\mathcal{T}_{XW_{1}%
}\left(  n,\epsilon\right)  $ (the choice of $M_{1}$ ensures that there exists
at least one such $j_{1}$). Next, the encoder searches for a $w_{2}^{n}\left(
j_{2}|j_{1}\right)  $ such that $\left(  x^{n},w_{1}^{n}\left(  j_{1}\right)
,w_{2}^{n}\left(  j_{2}|j_{1}\right)  \right)  \in\mathcal{T}_{XW_{1}W_{2}%
}\left(  n,\epsilon\right)  $ (the choice of $M_{2}$ ensures that there exists
at least one such $j_{2}$). The encoder sends $\left(  j_{1},b\left(
j_{2}\right)  \right)  $ where $b\left(  j_{2}\right)  $ is the bin index of
the $w_{2}^{n}\left(  j_{2}|j_{1}\right)  $ sequence. Thus, we have that
$\left(  XW_{1}\right)  -W_{2}-B$ forms a Markov chain and%
\end{subequations}
\begin{align}
&  \Pr\left(  B=b\left(  j_{2}\right)  |\left(  x^{n},w_{1}^{n}\left(
j_{1}\right)  ,w_{2}^{n}\left(  j_{2}|j_{1}\right)  \right)  \right.  \left.
\in\mathcal{T}_{XW_{1}W_{2}}\left(  n,\epsilon\right)  \right)  \nonumber\\
&  =\Pr\left(  B=b\left(  j_{2}\right)  |w_{2}^{n}\left(  j_{2}|j_{1}\right)
\right.  \left.  \in\mathcal{T}_{W_{2}}\left(  n,\epsilon\right)  \right)
=1/S.\label{PrBHB}%
\end{align}
With $\mu$ as defined in (\ref{mu_def}) for the typical set $\mathcal{T}%
_{XYW_{1}W_{2}}$, and $J\equiv\left(  J_{1},B\left(  J_{2}\right)  \right)  $,
the achievable equivocation can be lower bounded as
\begin{subequations}
\label{HBHAch}%
\begin{align}
&  \frac{1}{n}H\left(  Y^{n}|J_{1},B\left(  J_{2}\right)  \right)  \nonumber\\
&  \geq\frac{1}{n}H\left(  Y^{n}|J_{1},B\left(  J_{2}\right)  ,\mu\right)  \\
&  =\frac{1}{n}H\left(  Y^{n}|W_{1}^{n}\left(  J_{1}\right)  ,B\left(
J_{2}\right)  ,\mu\right)  \\
&  \geq\Pr\left(  \mu=1\right)  \frac{1}{n}H\left(  Y^{n}|W_{1}^{n}\left(
J_{1}\right)  ,B\left(  J_{2}\right)  ,\mu=1\right)  .\label{HYJ_HB}%
\end{align}
The probability $\Pr\left(  y^{n}|w_{1}^{n}\left(  j_{1}\right)  ,b\left(
j_{2}\right)  ,\mu=1\right)  $ for all $j_{1},j_{2},$ and $y^{n}$ can be
written as
\end{subequations}
\begin{subequations}
\label{HBHAch2}%
\begin{align}
&
{\textstyle\sum\limits_{\left(  x^{n},j_{2}\right)  }}
\Pr\left(  y^{n},j_{2},x^{n}|w_{1}^{n}\left(  j_{1}\right)  ,b\left(
j_{2}\right)  ,\mu=1\right)  \nonumber\\
&  =%
{\textstyle\sum\limits_{\left(  x^{n},j_{2}\right)  }}
\Pr\left(  x^{n},j_{2}|w_{1}^{n}\left(  j_{1}\right)  ,b\left(  j_{2}\right)
,\mu=1\right)  \Pr\left(  y^{n}|x^{n},\mu=1\right)  \label{HBPrY2}\\
&  =%
{\textstyle\sum\limits_{\left(  x^{n},j_{2}\right)  }}
\frac{\Pr\left(  x^{n},j_{2},w_{1}^{n}\left(  j_{1}\right)  ,b\left(
j_{2}\right)  |\mu=1\right)  }{\Pr\left(  w_{1}^{n}\left(  j_{1}\right)
,b\left(  j_{2}\right)  |\mu=1\right)  }\Pr\left(  y^{n}|x^{n},\mu=1\right)
\\
&  =%
{\textstyle\sum\limits_{\left(  x^{n},j_{2}\right)  }}
\frac{\Pr\left(  x^{n},j_{2},w_{1}^{n}\left(  j_{1}\right)  |\mu=1\right)
/S}{\Pr\left(  w_{1}^{n}\left(  j_{1}\right)  |\mu=1\right)  /S}\Pr\left(
y^{n}|x^{n},\mu=1\right)  \\
&  \leq2^{n\epsilon^{\prime}}%
{\textstyle\sum\limits_{\left(  x^{n},j_{2}\right)  }}
\Pr\left(  x^{n},j_{2}|w_{1}^{n}\left(  j_{1}\right)  ,\mu=1\right)
\Pr\left(  y^{n}|x^{n},\mu=1\right)  \label{HBPrY5}\\
&  =2^{n\epsilon^{\prime}}%
{\textstyle\sum\limits_{\left(  x^{n},j_{2}\right)  }}
\Pr\left(  x^{n},j_{2},y^{n}|w_{1}^{n}\left(  j_{1}\right)  ,\mu=1\right)  \\
&  =2^{n\epsilon^{\prime}}\Pr\left(  y^{n}|w_{1}^{n}\left(  j_{1}\right)
,\mu=1\right)
\end{align}
where (\ref{HBPrY2}) follows from the fact that $Y-X-\left(  W_{1}%
,W_{2}\right)  $ forms a\ Markov chain and (\ref{HBPrY5}) is obtained by
expanding $\Pr\left(  w_{1}^{n}\left(  j_{1}\right)  ,b\left(  j_{2}\right)
|\mu=1\right)  $ as follows:
\end{subequations}
\begin{subequations}
\label{HB_PrW1B1}%
\begin{align}
&  \Pr\left(  w_{1}^{n}\left(  j_{1}\right)  ,b\left(  j_{2}\right)
|\mu=1\right)  \nonumber\\
&  =\Pr\left(  w_{1}^{n}\left(  j_{1}\right)  |\mu=1\right)
{\textstyle\sum\limits_{w_{2}^{n}}}
\Pr\left(  b\left(  j_{2}\right)  ,w_{2}^{n}\left(  j_{1}\right)  |w_{1}%
^{n}\left(  j_{1}\right)  ,\mu=1\right)  \\
&  =\Pr\left(  w_{1}^{n}\left(  j_{1}\right)  |\mu=1\right)
{\textstyle\sum\limits_{w_{2}^{n}}}
\Pr\left(  w_{2}^{n}\left(  j_{1}\right)  |w_{1}^{n}\left(  j_{1}\right)
,\mu=1\right)  \frac{1}{S}\label{PrW1B1}\\
&  \geq\Pr\left(  w_{1}^{n}\left(  j_{1}\right)  |\mu=1\right)  \frac{\left(
1-\epsilon\right)  }{S}\label{PrW1B2}\\
&  =\Pr\left(  w_{1}^{n}\left(  j_{1}\right)  |\mu=1\right)  \frac
{2^{-n\epsilon^{\prime}}}{S}%
\end{align}
where (\ref{PrW1B1}) follows from the fact that $W_{1}-W_{2}-B$ forms a Markov
chain and (\ref{PrBHB}), while (\ref{PrW1B2}) follows the fact that for a
typical set $\Pr\left(  \mathcal{T}_{W_{1}W_{2}}\left(  n,\epsilon\right)
\right)  \geq\left(  1-\epsilon\right)  $ \cite[chap. 2]{Cover:book}. Thus,
from (\ref{HBHAch2}) we have that
\end{subequations}
\begin{align}
\Pr\left(  y^{n}|w_{1}^{n}\left(  j_{1}\right)  ,b\left(  j_{2}\right)
,\mu=1\right)   &  \leq2^{n\epsilon^{\prime}}\Pr\left(  y^{n}|w_{1}^{n}\left(
j_{1}\right)  ,\mu=1\right)  \\
&  \leq2^{-n\left(  H\left(  Y|W_{1}\right)  -\epsilon^{\prime\prime}\right)
}.\label{HBPrYW1}%
\end{align}
From (\ref{HYJ_HB}) and (\ref{HBPrYW1}), we then have%
\begin{align}
H\left(  Y^{n}|w_{1}^{n}\left(  j_{1}\right)  ,b\left(  j_{2}\right)
,\mu=1\right)   &  \geq%
{\textstyle\sum\limits_{y^{n}}}
\Pr\left(  y^{n}|w_{1}^{n}\left(  j_{1}\right)  ,\mu=1\right)  n\left(
H\left(  Y|W_{1}\right)  -\epsilon^{\prime\prime}\right)  \\
&  \geq n\left(  1-\epsilon\right)  \left(  H\left(  Y|W_{1}\right)
-\epsilon^{\prime\prime}\right)
\end{align}
such that%
\begin{align}
\frac{1}{n}H\left(  Y^{n}|J\right)   &  \geq\Pr\left(  \mu=1\right)  \frac
{1}{n}%
{\textstyle\sum\limits_{w_{1}^{n},b\left(  j_{2}\right)  }}
\Pr\left(  w_{1}^{n}\left(  j_{1}\right)  ,b\left(  j_{2}\right)
|\mu=1\right)  H\left(  Y^{n}|w_{1}^{n}\left(  j_{1}\right)  ,b\left(
j_{2}\right)  ,\mu=1\right)  \\
&  \geq\left(  1-\epsilon\right)  ^{3}\left(  H\left(  Y|W_{1}\right)
-\epsilon^{\prime\prime}\right)
\end{align}
where we have used the fact that for a typical set $\Pr\left(  \mathcal{T}%
_{YW_{1}W_{2}}\left(  n,\epsilon\right)  \right)  \geq\left(  1-\epsilon
\right)  $ \cite[chap. 2]{Cover:book}. The proof concludes by observing that
$H\left(  Y^{n}\right)  \geq H\left(  Y^{n}|J\right)  $ and $\epsilon
\rightarrow0,$ $\epsilon^{\prime\prime}\rightarrow0$ as $n\rightarrow\infty$.

\subsection{Proof of Theorem \ref{Th_KasRDE}}

\textit{Converse}: A lower bound on $R\left(  D_{1},D_{2},E\right)  $ can be
obtained as follows.
\begin{subequations}
\label{KasConR}%
\begin{align}
nR &  \geq H(J)\\
&  \geq I(X^{n}Y^{n};J)\\
&  =I(X^{n};J|Y^{n})+I(Y^{n};J)\\
&  =%
{\textstyle\sum\limits_{i=1}^{n}}
\left\{  I(X_{i};JX^{i-1}Y^{-1}Y_{i+1}^{n}|Y_{i})-I(X_{i};X^{i-1}Y^{-1}%
Y_{i+1}^{n}|Y_{i})\right.  \left.  +I(Y_{i};J,Y^{i-1})-I(Y_{i};Y^{i-1}%
)\right\}  \nonumber\\
&  =%
{\textstyle\sum\limits_{i=1}^{n}}
\left\{  I(X_{i};JX^{i-1}Y^{i-1}Y_{i+1}^{n}|Y_{i})\right.  \left.
+I(Y_{i};JY^{i-1})\right\}  \label{KasConv}\\
&  =%
{\textstyle\sum\limits_{i=1}^{n}}
\left\{  I(X_{i};JY^{i-1}|Y_{i})+I(X_{i};X^{i-1}Y_{i+1}^{n}|JY^{i-1}%
Y_{i})+I(Y_{i};JY^{i-1})\right\}  \label{KasConv2}%
\end{align}
where (\ref{KasConv}) follows from the independence of the pairs $\left(
X_{i},Y_{i}\right)  $ for all $i=1,2,\ldots,n$. Let $W_{1,i}\equiv\left(
J,Y^{i-1}\right)  $ and $W_{2,i}\equiv\left(  X^{i-1}Y_{i+1}^{n}\right)  $.
With these definitions, (\ref{KasConv2}) can be written as
\end{subequations}
\begin{align}
nR &  \geq%
{\textstyle\sum\limits_{i=1}^{n}}
\left\{  I(X_{i}Y_{i};W_{1,i})+I(X_{i};W_{2,i}|W_{1,i}Y_{i}\right\}  \\
&  \geq%
{\textstyle\sum\limits_{i=1}^{n}}
R_{I}\left(  D_{1,i},D_{2,i},E_{i}\right)  \label{ConKas3}\\
&  \geq nR_{I}\left(  D_{1},D_{2},E\right)  \label{ConKas4}%
\end{align}
where (\ref{ConKas3}) follows from Definition \ref{DefRDEKaspi} with
$D_{1,i},$ $D_{2,i},$ and $E_{i}$ defined as
\begin{subequations}
\label{KasConDE}%
\begin{align}
D_{1,i} &  \equiv\mathbb{E}\left[  d\left(  X_{i},g_{1,i}^{\prime}\left(
W_{1,i}\right)  \right)  \right]  \\
D_{2,i} &  \equiv\mathbb{E}\left[  d\left(  X_{i},g_{2,i}^{\prime}\left(
W_{2,i},W_{1,i},Y_{i}\right)  \right)  \right]  ,\text{ and}\\
E_{i} &  \equiv H(Y_{i}|W_{1,i}),
\end{align}
and (\ref{ConKas4}) follows from the convexity of $R_{I}(D_{1},D_{2},E)$ and
the definitions of $D_{k}$, \thinspace\thinspace$k=1,2,$ in (\ref{Dist}) and
the concavity of $H\left(  Y|W\right)  $, and hence, of $E$. We upper bound
the maximal achievable equivocation as
\end{subequations}
\begin{subequations}
\label{KasHYCon}%
\begin{align}
\frac{1}{n}H\left(  Y^{n}|J\right)   &  =%
{\textstyle\sum\limits_{i=1}^{n}}
\frac{1}{n}H\left(  Y_{i}|Y^{i-1}J\right)  \\
&  =%
{\textstyle\sum\limits_{i=1}^{n}}
\frac{1}{n}H\left(  Y_{i}|W_{i}\right)  \label{KConTD1}\\
&  =%
{\textstyle\sum\limits_{i=1}^{n}}
\frac{1}{n}E_{i}\label{KConTD1a}\\
&  \leq%
{\textstyle\sum\limits_{i=1}^{n}}
\frac{1}{n}\Gamma\left(  D_{1i},D_{2i}\right)  \label{KConTD2}\\
&  \leq\Gamma_{I}\left(  D_{1},D_{2}\right)
\end{align}
where (\ref{KConTD1}) follows from the definition of $W_{1,i}$,
(\ref{KConTD1a}) and (\ref{KConTD2}) follow from (\ref{GammaRDE_I}) in
Definition \ref{DefRDEKaspi} and from Lemma \ref{Lemma_TD}.

\textit{Achievability}: Fix $p\left(  w_{1},w_{2}|x,y\right)  $. First
generate $M_{1}=2^{n\left(  I(W_{1};XY)+\epsilon\right)  }$, $W_{1}^{n}\left(
j_{1}\right)  $ sequences, $j_{1}=1,2,\ldots,M_{1}$, i.i.d. according to
$p\left(  w_{1}\right)  $ (obtained from $p\left(  w_{1},w_{2}|x,y\right)  )$.
Generate $M_{2}=2^{n\left(  I(W_{2};XYW_{1})+\epsilon\right)  }$ $W_{2}%
^{n}\left(  j_{2}\right)  $ sequences i.i.d. according to $p\left(
w_{2}\right)  $ (obtained from $p\left(  w_{1},w_{2}|x,y\right)  )$. Bin the
resulting $W_{2}^{n}$ sequences into $S$ bins (analogously to the Wyner-Ziv
binning), chosen at random where $S=2^{n\left(  I(XYW_{1};W_{2})-I(W_{1}%
Y;W_{2})+\epsilon\right)  }$, and index these bins as $b\left(  j_{2}\right)
$. Upon observing a source sequence $\left(  x^{n},y^{n}\right)  ,$ the
encoder searches for a $W_{1}^{n}\left(  j_{1}\right)  $ sequence such that
$\left(  x^{n},y^{n},w_{1}^{n}\left(  j_{1}\right)  \right)  \in
\mathcal{T}_{XYW_{1}}\left(  n,\epsilon\right)  $ (the choice of $M_{1}$
ensures that there exists at least one such $j_{1}$). Next, the encoder
searches for a $w_{2}^{n}\left(  j_{2}\right)  $ such that $\left(
x^{n},y^{n},w_{1}^{n}\left(  j_{1}\right)  ,w_{2}^{n}\left(  j_{2}\right)
\right)  \in\mathcal{T}_{XYW_{1}W_{2}}\left(  n,\epsilon\right)  $ (the choice
of $M_{2}$ ensures that there exists at least one such $j_{2}$). The encoder
sends $\left(  j_{1},b\left(  j_{2}\right)  \right)  $ where $b\left(
j_{2}\right)  $ is the bin index of the $w_{2}^{n}\left(  j_{2}\right)  $
sequence at a rate $R=I(XY;W_{1})+I(X;W_{2}|W_{1}Y)+\epsilon$. Thus, we have
\end{subequations}
\begin{align}
&  \Pr\left(  B=b\left(  j_{2}\right)  |\left(  x^{n},y^{n},w_{1}^{n}\left(
j_{1}\right)  ,w_{2}^{n}\left(  j_{2}\right)  \right)  \right.  \left.
\in\mathcal{T}_{XYW_{1}W_{2}}\left(  n,\epsilon\right)  \right)  \nonumber\\
&  =\Pr\left(  B=b\left(  j_{2}\right)  |w_{2}^{n}\left(  j_{2}\right)
\right.  \left.  \in\mathcal{T}_{W_{2}}\left(  n,\epsilon\right)  \right)
=1/S.\label{ProbBKas}%
\end{align}
where (\ref{ProbBKas}) is the result of the code construction which yields a
Markov chain relationship $\left(  XYW_{1}\right)  -W_{2}-B$. With $\mu$ as
defined in (\ref{mu_def}) for the typical set $\mathcal{T}_{XYW_{1}W_{2}}$,
and $J\equiv\left(  J_{1},B\left(  J_{2}\right)  \right)  $, the achievable
equivocation can be lower bounded as
\begin{subequations}
\label{KasHY}%
\begin{align}
&  \frac{1}{n}H\left(  Y^{n}|J_{1},B\left(  J_{2}\right)  \right)  \nonumber\\
&  \geq\frac{1}{n}H\left(  Y^{n}|J_{1},B\left(  J_{2}\right)  ,\mu\right)  \\
&  =\frac{1}{n}H\left(  Y^{n}|W_{1}^{n}\left(  J_{1}\right)  ,B\left(
J_{2}\right)  ,\mu\right)  \\
&  \geq\Pr\left(  \mu=1\right)  \frac{1}{n}H\left(  Y^{n}|W_{1}^{n}\left(
J_{1}\right)  ,B\left(  J_{2}\right)  ,\mu=1\right)  .\label{KasHYach}%
\end{align}
The probability $\Pr\left(  y^{n}|w_{1}^{n}\left(  j_{1}\right)  ,b\left(
j_{2}\right)  ,\mu=1\right)  $ for all $j_{1},j_{2},$ and $y^{n}$ can be
written as
\end{subequations}
\begin{subequations}
\begin{align}
&
{\textstyle\sum\limits_{w_{2}^{n}}}
\Pr\left(  y^{n},w_{2}^{n}|w_{1}^{n}\left(  j_{1}\right)  ,b\left(
j_{2}\right)  ,\mu=1\right)  \nonumber\\
&  =%
{\textstyle\sum\limits_{w_{2}^{n}}}
\Pr\left(  w_{2}^{n}|w_{1}^{n}\left(  j_{1}\right)  ,b\left(  j_{2}\right)
,\mu=1\right)  \Pr\left(  y^{n}|w_{1}^{n}\left(  j_{1}\right)  ,w_{2}^{n}%
,\mu=1\right)  \label{KasPrY2}%
\end{align}
where (\ref{KasPrY2}) follows from the fact that $\left(  XYW_{1}\right)
-W_{2}-B$ forms a\ Markov chain. The probability $\Pr\left(  w_{2}%
|w_{1}\left(  j_{1}\right)  ,b\left(  j_{2}\right)  ,\mu=1\right)  $ can be
rewritten as
\end{subequations}
\begin{align}
&  \frac{\Pr\left(  w_{2}^{n},w_{1}^{n}\left(  j_{1}\right)  ,b\left(
j_{2}\right)  |\mu=1\right)  }{\Pr\left(  w_{1}^{n}\left(  j_{1}\right)
,b\left(  j_{2}\right)  |\mu=1\right)  }\nonumber\\
&  =\frac{\Pr\left(  w_{2}^{n},w_{1}^{n}\left(  j_{1}\right)  |\mu=1\right)
/\left\vert S\right\vert }{%
{\textstyle\sum\limits_{w_{2}}}
\Pr\left(  w_{2}^{n},w_{1}^{n}\left(  j_{1}\right)  |\mu=1\right)  /\left\vert
S\right\vert }\label{Prw1w2}\\
&  =\Pr\left(  w_{2}^{n}|w_{1}^{n}\left(  j_{1}\right)  ,\mu=1\right)
.\label{Prw1w2_2}%
\end{align}
Substituting (\ref{Prw1w2_2}) in (\ref{KasPrY2}), $\Pr\left(  y^{n}|w_{1}%
^{n}\left(  j_{1}\right)  ,b\left(  j_{2}\right)  ,\mu=1\right)  $ can be
written as
\begin{subequations}
\label{KasPr_w2w1}%
\begin{align}
&
{\textstyle\sum\limits_{w_{2}^{n}}}
\Pr\left(  w_{2}^{n}|w_{1}^{n}\left(  j_{1}\right)  ,\mu=1\right)  \Pr\left(
y^{n}|w_{1}^{n}\left(  j_{1}\right)  ,w_{2}^{n},\mu=1\right)  \nonumber\\
&  =%
{\textstyle\sum\limits_{w_{2}^{n}}}
\Pr\left(  y^{n},w_{2}^{n}|w_{1}^{n}\left(  j_{1}\right)  ,\mu=1\right)  \\
&  =\Pr\left(  y^{n}|w_{1}^{n}\left(  j_{1}\right)  ,\mu=1\right)  \\
&  \leq2^{-n\left(  H\left(  Y|W_{1}\right)  -\epsilon\right)  }%
\label{KasPrYW1}%
\end{align}
where we have used the fact that for a typical set $\Pr\left(  \mathcal{T}%
_{YW_{1}W_{2}}\left(  n,\epsilon\right)  \right)  \geq\left(  1-\epsilon
\right)  $ \cite[chap. 2]{Cover:book}. From (\ref{KasHYach}) and
(\ref{KasPrYW1}), we then have
\end{subequations}
\begin{subequations}
\label{KasH}%
\begin{align}
H\left(  Y^{n}|w_{1}^{n}\left(  j_{1}\right)  ,b\left(  j_{2}\right)
,\mu=1\right)   &  =%
{\textstyle\sum\limits_{y^{n}}}
\Pr\left(  y^{n}|w_{1}^{n}\left(  j_{1}\right)  ,\mu=1\right)  \log\frac
{1}{\Pr\left(  y^{n}|w_{1}^{n}\left(  j_{1}\right)  ,\mu=1\right)  }\\
&  \geq%
{\textstyle\sum\limits_{y^{n}}}
\Pr\left(  y^{n}|w_{1}^{n}\left(  j_{1}\right)  ,\mu=1\right)  n\left(
H\left(  Y|W_{1}\right)  -\epsilon\right)  \label{KasHYJexp}\\
&  \geq n\left(  1-\epsilon\right)  \left(  H\left(  Y|W_{1}\right)
-\epsilon\right)
\end{align}
where in (\ref{KasHYJexp}) we have used the fact that for a typical set
$\Pr\left(  T_{YW_{1}W_{2}}\left(  n,\epsilon\right)  \right)  \geq\left(
1-\epsilon\right)  $ \cite[chap. 2]{Cover:book}. Thus, we have
\end{subequations}
\begin{align}
\frac{1}{n}H\left(  Y^{n}|J\right)   &  \geq\Pr\left(  \mu=1\right)  \frac
{1}{n}%
{\textstyle\sum\limits_{w_{1}^{n},b\left(  j_{2}\right)  }}
\Pr\left(  w_{1}^{n}\left(  j_{1}\right)  ,b\left(  j_{2}\right)
|\mu=1\right)  H\left(  Y^{n}|w_{1}^{n}\left(  j_{1}\right)  ,b\left(
j_{2}\right)  ,\mu=1\right)  \\
&  \geq\left(  1-\epsilon\right)  ^{3}\left(  H\left(  Y|W_{1}\right)
-\epsilon\right)
\end{align}
where we have used the fact that for a typical set $\Pr\left(  \mathcal{T}%
_{YW_{1}W_{2}}\left(  n,\epsilon\right)  \right)  \geq\left(  1-\epsilon
\right)  $ \cite[chap. 2]{Cover:book}. The proof concludes by observing that
$H\left(  Y^{n}\right)  \geq H\left(  Y^{n}|J\right)  $ and $\epsilon
\rightarrow0$ as $n\rightarrow\infty$.

\subsection{Converse Proof for region $\mathcal{G}_{4}$}

We start by a simple lower bound on the rate
\begin{align}
R &  \geq I(X,Y;W_{1})+I(X;W_{2}|W_{1},Y)\nonumber\\
&  \geq I(X,Y;\hat{X}_{1})\label{rateLB}%
\end{align}
and an upper bound on $\Gamma$
\begin{align}
\Gamma &  \leq H(Y|W_{1})\nonumber\\
&  =H(Y|W_{1},\hat{X}_{1})\nonumber\\
&  \leq H(Y|\hat{X}_{1})\nonumber\\
&  =H(Y)-I(Y;\hat{X}_{1}).\label{eqLB}%
\end{align}
We will now use the distortion constraint of decoder $1$ alone to
simultaneously lower bound the rate and upper bound the equivocation. Consider
an arbitrary $p^{(1)}(\hat{x}_{1}|x,y)$ (and denote this as distribution
$\mathcal{P}_{1}$) given as:
\begin{align}
p^{(1)}(0|0,0) &  =a,\quad p^{(1)}(0|1,1)=b\nonumber\\
p^{(1)}(0|0,E) &  =c,\quad p^{(1)}(0|1,E)=d.\nonumber
\end{align}
For this distribution, we have
\begin{align}
\mathbb{P}(X\neq\hat{X}_{1}) &  =(1/2)\Big[(1-p)(1-a+b)+p(1-c+d)\Big]\\
H(\hat{X}_{1}) &  =h\left(  \frac{1}{2}\big[(1-p)(a+b)+p(c+d)\big]\right)  \\
H(\hat{X}_{1}|X,Y) &  =\frac{(1-p)}{2}(h(a)+h(b))+\frac{p}{2}(h(c)+h(d))\\
H(\hat{X}_{1}|Y) &  =\frac{(1-p)}{2}(h(a)+h(b))+ph\left(  \frac{c+d}%
{2}\right)  .
\end{align}
These four quantities characterize the bounds in (\ref{rateLB}) and
(\ref{eqLB}) exactly and also the achievable distortion.

Now consider a new distribution $\mathcal{P}_{2}$, with conditional
probabilities as follows:
\begin{align}
p^{(2)}(0|0,0) &  =1-b,\quad p^{(2)}(0|1,1)=1-a\nonumber\\
p^{(2)}(0|0,E) &  =1-d,\quad p^{(2)}(0|1,E)=1-c.\nonumber
\end{align}
It is straightforward to verify that the distortion, rate and equivocation
terms are the same for both $\mathcal{P}_{1}$ and $\mathcal{P}_{2}$. Next,
define a new distribution $\mathcal{P}_{3}$ as follows:
\[
p^{(3)}(\hat{x}_{1}|x,y)=%
\begin{cases}
p^{(1)}(\hat{x}_{1}|x,y) & \mbox{ w.p. }1/2,\\
p^{(2)}(\hat{x}_{1}|x,y) & \mbox{ w.p. }1/2.
\end{cases}
\]

We now note that $I(X,Y;\hat{X}_{1})$ is convex in $p(\hat{x}_{1}|x,y)$ and
$H(Y|\hat{X}_{1})=H(Y)-I(Y;\hat{X}_{1})$ is concave in $p(\hat{x}_{1}|y)$. By
Jensen's inequality, this implies that the distribution $\mathcal{P}_{3}$
defined above uses a rate that is at most as large and leads to an
equivocation that is at least as large when compared to both the distributions
$\mathcal{P}_{1}$ and $\mathcal{P}_{2}$. Hence, it suffices to consider input
distributions of the form $p^{(3)}(\hat{x}_{1}|x,y)$, which can be explicitly
written as
\begin{align}
p^{(3)}(0|0,0) &  =1-\beta,\quad p^{(3)}(0|1,1)=\beta\nonumber\\
p^{(3)}(0|0,E) &  =1-\alpha,\quad p^{(3)}(0|1,E)=\alpha.\nonumber
\end{align}
To satisfy the distortion constraint, we also have
\[
D_{1}\geq(1-p)\beta+p\alpha
\]
which leads to $\beta=(D_{1}-p\alpha)/(1-p)$. Now, also note that for a fixed
$\alpha$, this scheme yields a distortion of $p\alpha$ at the decoder $2$.
Furthermore, since the range of $\alpha\in\lbrack0,D_{1}/p]$, we note that the
worst case distortion for decoder $2$ (for a fixed $D_{1}$) is $pD_{1}%
/p=D_{1}$. This implies that as long as
\[
D_{2}\geq D_{1}%
\]
this region yields the stated tradeoff for the region $\mathcal{G}_{4}$.

\bibliographystyle{IEEEtran}
\bibliography{refHB}

\end{document}